\documentclass[conference]{IEEEtran}

\usepackage{wrapfig,epsfig}
\usepackage{graphicx, subfigure}
\usepackage{epstopdf}
\usepackage[linesnumbered,ruled,vlined]{algorithm2e}
\usepackage{amsmath}
\usepackage{amssymb}
\usepackage{url}
\usepackage{multirow}
\usepackage[english]{babel}
\usepackage{times}

\newtheorem{definition}{Definition}
\newtheorem{theorem}{Theorem}
\newtheorem{lemma}{Lemma}

\newtheorem{example}{Example}

\title{Efficient Processing of Reachability and Time-Based Path Queries in a Temporal Graph}

\author{
\IEEEauthorblockN{Huanhuan Wu{\small $^{*}$},  Yuzhen Huang{\small $^{*}$}, James Cheng{\small $^{*}$}, Jinfeng Li{\small $^{*}$}, Yiping Ke{\small $^{\#}$}}
\IEEEauthorblockA{{\small $^*$}Department of Computer Science and Engineering, The Chinese University of Hong Kong\\
Emails: \{hhwu,yzhuang,jcheng,jfli\}@cse.cuhk.edu.hk}
$^{\#}$School of Computer Engineering, Nanyang Technological University\\
Email: ypke@ntu.edu.sg
}

\begin{document}

\maketitle

\begin{abstract}

A temporal graph is a graph in which vertices communicate with each other at specific time, e.g., $A$ calls $B$ at 11 a.m. and talks for 7 minutes, which is modeled by an edge from $A$ to $B$ with starting time ``11 a.m.'' and duration ``7 mins''. Temporal graphs can be used to model many networks with time-related activities, but efficient algorithms for analyzing temporal graphs are severely inadequate. We study fundamental problems such as answering reachability and time-based path queries in a temporal graph, and propose an efficient indexing technique specifically designed for processing these queries in a temporal graph. Our results show that our method is efficient and scalable in both index construction and query processing.

\end{abstract}

\section{Introduction}  \label{sec:intro}


Graph has been extensively used to model and study the structures of various online social networks, mobile communication networks, e-commerce networks, email networks, etc. In these graphs, vertices are users or companies, while edges model the relationship between them. However, there is one type of information that is often missing in these graphs for simplicity of analysis: in reality, a relationship occurs at a specific time and lasts for a certain period. Formally, this can be modeled as a \textbf{temporal graph}, in which each edge is represented by $(u,v,t,\lambda)$, indicating that the relationship from $u$ to $v$ starts at time $t$ and lasts for a duration of $\lambda$. There may be multiple edges between $u$ and $v$ indicating their relationship occurring in different time periods.

Temporal graphs can be used to model and study many time-related activities in the above-mentioned graphs. For example, users follow or tag other users in different periods in online social networks; friends chat with each other in different time periods in mobile phone networks; people send messages to each other at different times in email networks or instant messaging networks; customers buy products from sellers at different times in online shopping platforms, or different types of transactions happened between different parties in different periods in e-commerce networks, etc.

Research on \textbf{non-temporal graphs} (i.e., general graphs without time information) has been extensively studied. However, for temporal graphs, even some fundamental problems have not been well studied (e.g., graph traversal, connected components, reachability, ``shortest paths'', etc.). In this paper, we study the problems of computing the reachability and the ``shortest path distance'' from a vertex to another vertex in a temporal graph.

Graph reachability and shortest path both have numerous important applications. However, in a temporal graph, the problems become more complicated due to the order imposed by time. For example, consider a toy train-schedule graph shown in Figure~\ref{fig:transform}(a), where the number next to each edge is the day  (e.g., Day 1, Day 2, etc.) that a train departs, and assume that the duration of each train takes 2 days. Suppose now one wants to travel from {\tt a} to {\tt d}. If he chooses to go to {\tt d} via {\tt b}, then he can leave either on Day 1 or Day 2, and he will reach {\tt d} on Day 6. Now suppose that he wants to depart later, on Day 4, then he cannot reach {\tt d} because the train departing on Day 4 from {\tt a} reaches {\tt c} on Day 6, but the train leaves from {\tt c} on Day 5 to {\tt d}. However, if we do not consider the time information, then the traveler may still take the train on Day 4 from {\tt a} to {\tt c}, not realizing that he would not catch the train from {\tt c} to {\tt d} in this case.

Given two vertices, $u$ and $v$, in a temporal graph, and a time interval $T$, we study how to compute (1)whether $u$ can reach $v$ within $T$, (2)the \emph{earliest time} $u$ can \emph{reach} $v$ within $T$, and (3)the \emph{duration of a fastest path} from $u$ to $v$ within $T$. The \emph{reachability}, \emph{earliest-arrival time}, and \emph{minimum duration} from source vertices to target vertices have been found useful in the study of temporal networks such as temporal graph connectivity~\cite{KempeKK02jcss}, temporal betweenness and closeness~\cite{PanS11physrev,SantoroQFCA11corr}, temporal connected components in~\cite{TangMML10ccr}, information propagation~\cite{ClementiP10}, information latency~\cite{CasteigtsFMS11ipps,KossinetsKW08kdd}, temporal efficiency and clustering coefficient~\cite{TangMML10ccr}, temporal small-world behavior study~\cite{TangSMML10phyrev}, etc.


The above cited works, however, did not focus on the design of efficient algorithms to compute reachability, earliest-arrival time and minimum duration, and their results were mostly obtained from small temporal graphs. Wu et al.~\cite{WuCHKLX14pvldb} made a significant improvement over existing algorithms~\cite{XuanFJ03ijfcs} and their algorithms can handle much larger temporal graphs than existing works. However, their algorithms were not designed for online querying, while in many applications it is demanding to find the reachability,  earliest-arrival time or minimum duration from a source vertex to a target vertex in real time. Wang et al.~\cite{WangLYXZ15sigmod} presented an indexing method to answer online queries of earliest-arrival time or minimum duration from a source vertex to a target vertex. However, their indexing method cannot scale to large temporal graphs. In addition, their indexing method does not support dynamic update, which is practically important for temporal graphs since updates are frequent in most real-world temporal graphs. In view of this, we propose an index to support efficient online querying for large temporal graphs, which also supports efficient dynamic update. 

Our method first transforms a temporal graph into a new graph which is a \emph{directed acyclic graph} (\emph{DAG}), on which existing indexing methods for reachability querying~\cite{ChengHWF13sigmod,JinW13pvldb,SeufertABW13icde,TrisslL07sigmod,SchaikM11sigmod,WeiYLJ14pvldb,YanoAIY13cikm,YildirimCZ12vldb,ZhuLWX14sigmod} can also be applied. However, this DAG is often significantly larger than the DAGs that are handled by existing methods. It also possesses of unique properties of temporal graphs, while all existing methods were designed for handling non-temporal graphs. Thus, more scalable methods that also consider the properties of temporal graphs need to be designed.

We propose \textbf{TopChain}, which is a labeling scheme for answering reachability queries. A labeling scheme, e.g., 2-hop label~\cite{CohenHKZ02soda}, constructs two labels for each vertex $v$, $L_{in}(v)$ and $L_{out}(v)$, where $L_{in}(v)$ and $L_{out}(v)$ are the set of vertices that can reach $v$ and that are reachable from $v$, respectively. A query whether $u$ can reach $v$ is answered by intersecting $L_{out}(u)$ and $L_{in}(v)$, since there exists a common vertex in $L_{out}(u)$ and $L_{in}(v)$ if $u$ can reach $v$. However, $L_{in}(v)$ and $L_{out}(v)$ are often too large, and various methods have been proposed to reduce their sizes~\cite{ChengHWF13sigmod,JinW13pvldb,WangLYXZ15sigmod,YanoAIY13cikm,ZhuLWX14sigmod}. 

TopChain decomposes an input DAG into a set of chains~\cite{Simon88tcs,Jagadish90tods}, where a chain is an ordered sequence of vertices such that each vertex can reach the next vertex in the chain. Thus, $L_{in}(v)$ and $L_{out}(v)$ only need to keep the last and first vertex in a chain that can reach $v$ and that is reachable from $v$, respectively. However, the number of chains can still be too large for a large graph, and as a solution, TopChain ranks the chains and only uses the top $k$ chains for each vertex. In this way, the size of the labels is kept to at most $2k$ for each vertex, and index construction takes only linear time, as $k$ is a small constant. The $k$ labels may not be able to answer every query, and thus online search may still be required. However, the labels can be employed to do effective pruning and online search converges quickly.




The contributions of our work are summarized as follows:


\begin{itemize}
  \item We propose an efficient indexing method, TopChain, for answering reachability and time-based path queries in a temporal graph, which is useful for analyzing real-world networks with time-based activities~\cite{WuCHKLX14pvldb}.
  \item TopChain has a linear index construction time and linear index size. Although existing methods can be applied to our transformed graph for answering reachability queries, our method is the only one that makes use of the properties of a temporal graph to design the indexing scheme. 
      TopChain also applies the properties of temporal graphs to devise an efficient algorithm for dynamic update of the index.

  \item We evaluated the performance of TopChain on a set of 15 real temporal graphs. Compared with the state-of-the-art reachability indexes~\cite{SeufertABW13icde,SchaikM11sigmod,WeiYLJ14pvldb,YildirimCZ12vldb,ZhuLWX14sigmod}, TopChain is from a few times to a few orders of magnitude faster in query processing, with a smaller or comparable index size and index construction cost. Compared with TTL~\cite{WangLYXZ15sigmod}, TopChain has significantly better indexing performance, is faster in query processing for most of the datasets, is more scalable, and supports efficient update maintenance.

\end{itemize}


\noindent \textbf{Paper outline. } Section~\ref{sec:def} defines the problem. Section~\ref{sec:transform} describes graph transformation. Sections~\ref{sec:index} and~\ref{sec:query} present the details of indexing and query processing. Section~\ref{sec:improvement} presents some improvements on labeling. Section~\ref{sec:result} reports experimental results. Section~\ref{sec:related} discusses related work and Section~\ref{sec:conclude} gives the concluding remarks.

\section{Problem Definition}  \label{sec:def}

\begin{table}[!tbp]
\caption{Frequently-used notations} \label{tab:notations}
\begin{center}
\resizebox{\linewidth}{!}{
\begin{tabular}{|c|l|}
  \hline
  Notation & Description \\
  \hline
  $\mathcal{G} = (\mathcal{V}, \mathcal{E})$ & A temporal graph \\
  \hline
  $e = (u, v, t, \lambda) \in \mathcal{E}$ & A temporal edge\\
  \hline
  $t(e)$ & The starting time of edge $e$ \\
  \hline
  $\lambda(e)$ & The traversal time of edge $e$ \\
  \hline
  $\Pi(u,v)$ & The set of temporal edges from $u$ to $v$ \\
  \hline
  $\pi(u,v)$ & The number of temporal edges from $u$ to $v$ \\
  \hline
  $\pi$ & Max. \# of temporal edges between any two vertices in $\mathcal{G}$ \\
  \hline
  $\Gamma_{out}(u,\mathcal{G})$ ($\Gamma_{in}(u,\mathcal{G})$)  & The set of out-neighbors (in-neighbors)  of   a vertex $u$ in $\mathcal{G}$ \\
  \hline
  $d_{out}(u,\mathcal{G}) (d_{in}(u,\mathcal{G}))$ & The out-degree (in-degree) of a vertex $u$ in $\mathcal{G}$ \\
  \hline
  $G=(V, E)$ & A directed acyclic graph (DAG) \\
  \hline
  $\mathbf{T}_{out}(v)$  & The set of distinct starting (arrival) time \\ ($\mathbf{T}_{in}(v)$) & of out-edges (in-edges) of $v$ \\
  \hline
  $V_{out}(v)$  ($V_{in}(v)$) & The set of vertices, $\langle v,t \rangle$, for each $t \in \mathbf{T}_{out}(v)$ ($\mathbf{T}_{in}(v)$)\\
  \hline
  $L_{out}(v)$ ($L_{in}(v)$) & The set of out-labels (in-labels) of a vertex $v$ \\
  \hline
  $\mathbb{C}=\{C_1, \ldots, C_l\}$ & A chain cover of $G$\\
  \hline
  $code(v)$ & Chain code of a vertex $v$ \\
  \hline
  $RC(v)$ ($RC^{-1}(v)$) & The set of chains that $v$ can reach (can reach $v$)\\
  \hline
  $first_v(C)$  ($last_v(C)$)  & The first (last) vertex in $C$ that $v$ can reach (can reach $v$) \\
  \hline
  $RF(v)$ ($RL(v)$)  & Set of first (last) reachable vertices in $RC(v)$ ($RC^{-1}(v)$) \\
  \hline
  $RFcode(v)$ ($RLcode(v)$)  & Set of chain codes of vertices in $RF(v)$ ($RL(v)$)\\
  \hline
  $top_{k}(RFcode(v))$  & The first $k$ chain codes in $RFcode(v)$ \\ ($top_{k}(RLcode(v))$) &($RLcode(v)$) \\
  \hline
\end{tabular}
}
\end{center}
\end{table}

\begin{figure}[!tbp]
\begin{center}
\includegraphics[width = 3.2in]{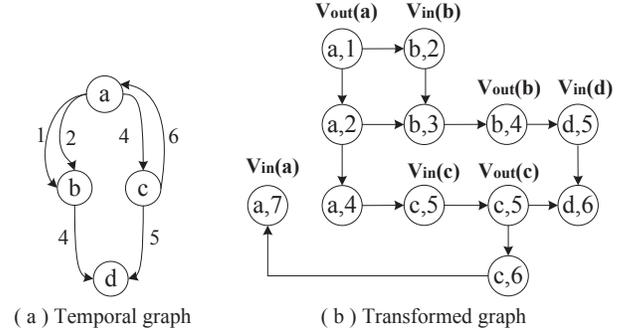}
\caption{A temporal graph $\mathcal{G}$ and its transformed graph $G$} \label{fig:transform}
\end{center}
\end{figure}

Let $\mathcal{G} = (\mathcal{V}, \mathcal{E})$ be a temporal graph, where $\mathcal{V}$ is the set of vertices in $\mathcal{G}$ and $\mathcal{E}$ is the set of edges in $\mathcal{G}$. An edge $e \in \mathcal{E}$ is a quadruple $(u, v, t, \lambda)$, where $u, v \in \mathcal{V}$, $t$ is the \textbf{starting time}, and $\lambda$ is the \textbf{traversal time} to go from $u$ to $v$ starting at time $t$. We denote the starting time of $e$ by $t(e)$ and the traversal time of $e$ by $\lambda(e)$. Alternatively, we can consider that $e$ is \textbf{active} during the period $[t, t+\lambda]$.

If edges are undirected, then the starting time and traversal time of an edge are the same from $u$ to $v$ as from $v$ to $u$. We focus on directed temporal graphs in this paper since an undirected edge can be modeled by two bi-directed edges.

We denote the set of temporal edges from $u$ to $v$ in $\mathcal{G}$ by $\Pi(u,v)$, and the number of temporal edges from $u$ to $v$ in $\mathcal{G}$ by $\pi(u,v)$, i.e., $\pi(u,v)=|\Pi(u,v)|$. We also define the maximum number of temporal edges from $u$ to $v$, for any $u$ and $v$ in $\mathcal{G}$, by $\pi = \max\{\pi(u,v): (u,v) \in (\mathcal{V} \times \mathcal{V})\}$. 


We define the set of \textbf{out-neighbors} of a vertex $u$ in $\mathcal{G}$ as $\Gamma_{out}(u,\mathcal{G})$ $=\{v: (u, v, t, \lambda) \in \mathcal{E}\}$, and the \textbf{out-degree} of $u$ in $\mathcal{G}$ as $d_{out}(u,\mathcal{G})$ $=\sum_{v \in \Gamma_{out}(u,\mathcal{G})} \pi(u,v)$. Similarly, we define the \textbf{in-neighbors} and \textbf{in-degree} of $u$ as $\Gamma_{in}(u,\mathcal{G})=\{v: (v, u, t, \lambda) \in \mathcal{E}\}$ and $d_{in}(u,\mathcal{G})=\sum_{v \in \Gamma_{in}(u,\mathcal{G})} \pi(v,u)$.


A \textbf{temporal path} $P$ in a temporal graph $\mathcal{G}$ is a sequence of edges $P=\langle e_1, e_2, \ldots, e_p \rangle$, such that $e_i=(v_i, v_{i+1}, t_i, \lambda_i)$ $\in \mathcal{E}$ is the $i$-th temporal edge on $P$ for $1 \le i \le p$, and $(t_i + \lambda_i) \le t_{i+1}$ for $1 \le i < p$. Note that for the last edge $(v_{p}, v_{p+1}, t_{p}, \lambda_{p})$ on $P$, we do not put a constraint on $(t_{p} + \lambda_{p})$ since  $t_{p+1}$ is not defined for the path $P$. In fact, $(t_{p} + \lambda_{p})$ is the \textbf{ending time} of $P$, denoted by $end(P)$. We also define the \textbf{starting time} of $P$ as $start(P)=t_1$.  We define the \textbf{duration} of $P$ as $dura(P)=end(P)-start(P)$.


Based on the temporal paths, we give the definitions of minimum temporal paths~\cite{WuCHKLX14pvldb} and temporal reachability as follows.



\begin{definition}[Minimum Temporal Paths~\cite{WuCHKLX14pvldb}]    \label{de:minPath}
Let $\mathbf{P}(u,v,[t_{\alpha}, t_{\omega}])=\{P: P$ is a temporal path from $u$ to $v$ such that $start(P) \ge t_{\alpha}$, $end(P) \le t_{\omega}\}$.

A temporal path $P \in \mathbf{P}(u,v,[t_{\alpha}, t_{\omega}])$ is an \textbf{earliest-arrival path} if $end(P) = \min\{end(P'): P' \in \mathbf{P}(u,v,[t_{\alpha}, t_{\omega}])\}$. The \textbf{earliest-arrival time} to reach $v$ from $u$ within $[t_{\alpha}, t_{\omega}]$ is given by $end(P)$.

A temporal path $P \in \mathbf{P}(u,v,[t_{\alpha}, t_{\omega}])$ is a \textbf{fastest path} if $dura(P)$ $=$ $\min\{dura(P'): P' \in \mathbf{P}(u,v,[t_{\alpha}, t_{\omega}])\}$. The \textbf{minimum duration} taken to go from $u$ to $v$ within $[t_{\alpha}, t_{\omega}]$ is given by $dura(P)$.
\end{definition}


\begin{definition}[Temporal Reachability]    \label{de:reachability}
Given two vertices $u$ and $v$, and a time interval $[t_{\alpha}, t_{\omega}]$, $u$ can reach $v$ (or $v$ is reachable from $u$) within $[t_{\alpha}, t_{\omega}]$ if $\mathbf{P}(u,v,[t_{\alpha}, t_{\omega}])\neq \emptyset$, i.e., there exists a temporal path $P$ from $u$ to $v$ such that $start(P) \ge t_{\alpha}$ and $end(P) \le t_{\omega}$. 
\end{definition}

%
%


\begin{example} \label{eg:definition}
Figure~\ref{fig:transform}(a) shows a temporal graph $\mathcal{G}$. For simplicity, we assume that the traversal time for every edge is 1 time unit. In  $\mathcal{G}$, $a$ can reach $d$ within time interval $[2,5]$ since there is a temporal path $P=\langle(a, b, 2, 1),(b, d, 4, 1)\rangle$, while $a$ cannot reach $d$ within $[1,3]$ since there is no temporal path from $a$ to $d$ within $[1,3]$. Given source vertex $a$, target vertex $d$, and a time interval $[1, 10]$, $P_1=\langle(a, b, 2, 1),(b, d, 4, 1)\rangle$ is an earliest-arrival path with arrival time $4+1=5$, $P_2=\langle(a, c, 4, 1),(c, d, 5, 1)\rangle$ is a fastest path with duration $(5+1)-4=2$. Note that $P_1$ is not a fastest path, and  $P_2$ is not an earliest-arrival path within $[1, 10]$.
\end{example}

\noindent \textbf{Problem definition:} Given a temporal graph $\mathcal{G}$, we propose to construct an index, such that given a source vertex $u$ and a target vertex $v$, and a time interval $[t_{\alpha}, t_{\omega}]$, we can efficiently answer the following queries: (1) whether $u$ can reach $v$ within $[t_{\alpha}, t_{\omega}]$, (2) the earliest-arrival time going from $u$ to $v$ within $[t_{\alpha}, t_{\omega}]$, and (3) the minimum duration taken to go from $u$ to $v$ within $[t_{\alpha}, t_{\omega}]$.

Our method can also be applied to compute another type of minimum temporal path called \emph{\textbf{latest-departure path}}~\cite{WuCHKLX14pvldb}. However, the concept of latest-departure path is symmetric to that of earliest-arrival path.  We hence omit the details.



\section{Graph Transformation}  \label{sec:transform}


In addition to indexing and querying efficiency, another key consideration in designing an indexing method for querying temporal graphs is that the index must support efficient dynamic update. This is important since temporal edges are created and added frequently over time in most real applications (e.g., there can be many phone calls and text messages created just over a short period of time). Existing work such as TTL~\cite{WangLYXZ15sigmod}, however, does not support dynamic update. We carefully examined a graph transformation method proposed in~\cite{WuCHKLX14pvldb}, and found that the transformation can be applied to design an efficient index that also allows efficient dynamic update.

We first present how to transform a temporal graph $\mathcal{G}=(\mathcal{V},\mathcal{E})$ into a new graph $G=(V,E)$. The construction of $G$ consists of two phases:


\begin{enumerate}
  \item Construction of vertex set: for each vertex $v \in \mathcal{V}$, create vertices in $V$ as follows.


  		\begin{enumerate}
  			\item Let $T_{in}(u,v)=\{t+\lambda: (u,v,t,\lambda) \in \Pi(u,v)\}$ for each $u$$\in$$\Gamma_{in}(v,\mathcal{G})$, and $\mathbf{T}_{in}(v)$$=$$\bigcup_{u \in \Gamma_{in}(v,\mathcal{G})} T_{in}(u,v)$, i.e., $\mathbf{T}_{in}(v)$ is the set of distinct time instances at which edges from in-neighbors of $v$ arrive at $v$.

            Create $|\mathbf{T}_{in}(v)|$ copies of $v$, each labeled with $\langle v,t \rangle$ where $t$ is a distinct arrival time in $\mathbf{T}_{in}(v)$. Denote this set of vertices as $V_{in}(v)$, i.e., $V_{in}(v)=\{\langle v,t \rangle: t \in \mathbf{T}_{in}(v)\}$. Sort vertices in $V_{in}(v)$ in descending order of their time, i.e., for any $\langle v,t_1 \rangle, \langle v,t_2 \rangle \in V_{in}(v)$, $\langle v,t_1 \rangle$ is ordered before $\langle v,t_2 \rangle$ in $V_{in}(v)$ iff $t_1 > t_2$.

  			\item Let $T_{out}(v,u)=\{t: (v,u,t,\lambda) \in \Pi(v,u)\}$ for each $u \in \Gamma_{out}(v,\mathcal{G})$, and $\mathbf{T}_{out}(v)$$=$$\bigcup_{u \in \Gamma_{out}(v,\mathcal{G})} T_{out}(v,u)$.

            Create $|\mathbf{T}_{out}(v)|$ copies of $v$, each labeled with $\langle v,t \rangle$ where $t$ is a distinct starting time in $\mathbf{T}_{out}(v)$. Denote this set of vertices as $V_{out}(v)$, i.e., $V_{out}(v)=\{\langle v,t \rangle: t \in \mathbf{T}_{out}(v)\}$. Sort vertices in $V_{out}(v)$ in descending order of their time.

  		\end{enumerate}


  \item Construction of edge set: create edges in $E$ as follows.


  		\begin{enumerate}
  			\item Let $V_{in}(v)=\{\langle v,t_1 \rangle, \langle v,t_2 \rangle, \ldots, \langle v,t_h \rangle\}$, where $t_i > t_{i+1}$ for $1 \le i < h$. Create a directed edge from each $\langle v,t_{i+1} \rangle$ to $\langle v,t_i \rangle$, for $1 \le i < h$. No edge is created if $h \le 1$. Create edges for $V_{out}(v)$ in the same way.

  			\item For each vertex $\langle v,t_{in} \rangle$ according to its order in $V_{in}(v)$, create a directed edge from $\langle v,t_{in} \rangle$ to vertex $\langle v,t_{out} \rangle$ $\in V_{out}(v)$, where $t_{out}= \min\{ t : \langle v,t \rangle \in V_{out}(v), t \geq t_{in} \}$ and no edge from another vertex $\langle v,t'_{in} \rangle \in V_{in}(v)$ to $\langle v,t_{out} \rangle$ has been created.

  			\item For each temporal edge $e = (u, v, t, \lambda) \in \mathcal{E}$, create a directed edge from the vertex $\langle u, t \rangle \in V_{out}(u)$ to the vertex $\langle v, t+\lambda \rangle \in V_{in}(v)$.
		\end{enumerate}  		

\end{enumerate}


%


The following example illustrates graph transformation.


\begin{example} \label{eg:transform}
Figure~\ref{fig:transform}(b) shows the transformed graph $G$ of the temporal graph $\mathcal{G}$ in Figure~\ref{fig:transform}(a), where $\lambda=1$ for all edges. Although there is a cycle $\langle(a, c, 4, 1),$ $(c, a, 6, 1)\rangle$ in $\mathcal{G}$,  $G$ is a DAG.
\end{example}

\section{Top-k Chain Labeling}  \label{sec:index}

In this section, we present the \emph{top-$k$ chain labeling scheme}, which we name as \textbf{TopChain}. We focus on our discussion on reachability queries, and we discuss how TopChain is applied to answer time-based queries in Section~\ref{ssec:timequery}.

Many labeling schemes have been proposed to answer reachability queries~\cite{ChengHWF13sigmod,JinW13pvldb,SeufertABW13icde,TrisslL07sigmod,WeiYLJ14pvldb,YanoAIY13cikm,YildirimCZ12vldb,ZhuLWX14sigmod}. They first transform the input directed graph into a \emph{directed acyclic graph} (\emph{DAG}) by collapsing each \emph{strongly connected component} (\emph{SCC}) into a super vertex, and then construct labels on the much smaller DAG. Theoretically, the existing labeling schemes can also be applied to the transformed non-temporal graph of a temporal graph. However, practically there may be scalability issues since the transformed graph is a DAG itself and is much larger than the DAGs indexed by existing methods. In the experiments, we used real temporal graphs with more than 200 million edges, for which the transformed graph is more than an order of magnitude larger than the size of existing DAGs. Before we introduce our labeling scheme, we first prove that a transformed graph is a DAG.


Given a transformed graph $G=(V,E)$, for each vertex $\langle w,t \rangle \in V$, let $v=\langle w,t \rangle$. For simplicity, we often simply use $v$ instead of $\langle w,t \rangle$ in our discussion. We use $\langle w,t \rangle$ only when we need to refer to $v$'s \emph{time stamp} $t$. 


\begin{lemma}   \label{le:dag}
Let $G=(V,E)$ be the transformed graph of a temporal graph $\mathcal{G}$, if the traversal time of each edge in $\mathcal{G}$ is non-zero, then $G$ is a DAG.
\end{lemma}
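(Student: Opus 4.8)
The plan is to exhibit a function on the vertices of $G$ that never decreases along an edge and strictly increases along all but one harmless kind of edge, and then to rule out directed cycles. The natural candidate is the \emph{time stamp} itself: for a vertex $x=\langle w,t\rangle$ set $\tau(x)=t$. First I would note that every vertex of $G$ is a copy built in phase~(1a), so it lies in some $V_{in}(w)$ (an \emph{in-copy}), or in phase~(1b), so it lies in some $V_{out}(w)$ (an \emph{out-copy}), and that these two families are disjoint as vertex sets even when they happen to share a label $\langle w,t\rangle$. Then I would examine the three edge-construction rules of phase~2 and record the behaviour of $\tau$ along each.

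The edges that sequence the copies inside one $V_{in}(v)$ or $V_{out}(v)$ (rule 2a) run from the copy of smaller time to the copy of larger time, because each such set is sorted in descending order of time and the edge points from $\langle v,t_{i+1}\rangle$ to $\langle v,t_i\rangle$ with $t_i>t_{i+1}$; hence $\tau$ strictly increases. The edges induced by a temporal edge $e=(u,v,t,\lambda)\in\mathcal{E}$ (rule 2c) go from $\langle u,t\rangle$ to $\langle v,t+\lambda\rangle$, and the hypothesis $\lambda>0$ gives $t<t+\lambda$, so $\tau$ strictly increases here too; this is the sole place where the non-zero traversal time assumption is used. The only remaining edges link an in-copy $\langle v,t_{in}\rangle$ to an out-copy $\langle v,t_{out}\rangle$ with $t_{out}\ge t_{in}$ (rule 2b), along which $\tau$ is non-decreasing but may be constant. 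Thus $\tau$ never decreases along an edge, and the unique edges that can preserve it are the rule-2b edges, each directed from an in-copy to an out-copy.

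Finally I would argue by contradiction. If $G$ had a directed cycle $x_0\to x_1\to\cdots\to x_k=x_0$, then chaining $\tau(x_0)\le\tau(x_1)\le\cdots\le\tau(x_k)=\tau(x_0)$ would force $\tau$ to be constant on the cycle, so every edge of the cycle would have to preserve $\tau$ and hence be a rule-2b edge. But each rule-2b edge ends at an out-copy while every rule-2b edge starts at an in-copy, and in-copies and out-copies are disjoint, so two consecutive rule-2b edges cannot occur; this contradicts the existence of the cycle. Therefore $G$ is acyclic.

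I expect the only point needing care to be the combinatorial bookkeeping behind this contradiction: confirming that the in-copies and out-copies really form two disjoint classes and that the rule-2b edges are exactly the $\tau$-preserving edges, all oriented from the in-class to the out-class. Once this orientation is fixed, monotonicity of $\tau$ finishes the argument immediately, and one sees why $\lambda>0$ is needed: if $\lambda=0$ were allowed, a rule-2c edge could also preserve $\tau$, and two opposite temporal edges occurring at the same instant would, through the rule-2b links, close up into a directed cycle, so the hypothesis cannot be dropped.
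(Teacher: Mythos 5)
Your proof is correct and follows essentially the same route as the paper's: both argue by contradiction that the time stamp is non-decreasing along every edge of $G$ and strictly increasing along enough edges to rule out a cycle. The only real difference is in the degenerate case where the time stamp is constant around the cycle — the paper disposes of it by asserting that copies of a single original vertex cannot form a cycle ``according to the construction of $G$,'' whereas you make this explicit by observing that the only $\tau$-preserving edges are the rule-2b links from in-copies to out-copies, which cannot be chained; this is a slightly more careful rendering of the same argument.
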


\begin{proof}
We prove by contradiction. If $G$ is not a DAG, then there exists a cycle $C=\langle v=u_1, \ldots, u_h=v \rangle$ in $G$, where $h>1$ and $(u_{i}, u_{i+1})$ is an edge in $E$ for $1 \leq i < h$, i.e. $v$ can reach itself. Let $u_i=\langle w_i,t_i \rangle$ for $1 \leq i \le h$. According to the construction of $G$, for every edge $(u_{i}, u_{i+1})$, we have $t_i \le t_{i+1}$, or $t_i < t_{i+1}$ if $w_i \ne w_{i+1}$, since the traversal time of each edge is non-zero. If $w_1 = w_2 = \dots = w_h$, i.e., $w_i$ (for $1 \le i \le h$) is the same vertex in $\mathcal{G}$, then $C$ cannot be a cycle according to the construction of $G$. Thus, there exists some $i$ such that $w_i \ne w_{i+1}$, which implies that $t_1 \le \dots \le t_i < t_{i+1} \le \dots t_h$. But $v=u_1=u_h$ implies $t_1=t_h$, which is a contradiction.
\end{proof}


The size of $G$ is $O(|\mathcal{V}|+|\mathcal{E}|)$, and is even considerably larger than $\mathcal{G}$ (see Table~\ref{tab:realdata}). Thus, the DAGs we handle in this paper are significantly larger than those used to test existing indexes. In the following, we propose a more efficient method that can handle much larger DAGs.

\subsection{Label Construction} \label{ssec:labeling}

Given a DAG (not necessarily a transformed graph), $G=(V,E)$,  we use $u \rightarrow v$ to denote that $u$ can reach $v$ in $G$, and $u \nrightarrow v$ if $u$ cannot reach $v$. We assume $v \rightarrow v$ for any $v \in V$. We define the set of \textbf{out-neighbors} of a vertex $u$ in $G$ as $\Gamma_{out}(u,G)=\{v: (u, v) \in E\}$, and the \textbf{out-degree} of $u$ in $G$ as $d_{out}(u,G)=|\Gamma_{out}(u,G)|$. Similarly, we define $\Gamma_{in}(u,G)=\{v: (v, u) \in E\}$ and $d_{in}(u,G)=|\Gamma_{in}(u,G)|$.


TopChain constructs two sets of labels for each $v \in V$, denoted by $L_{in}(v)$ and $L_{out}(v)$, called \textbf{in-labels} and \textbf{out-labels} of $v$. TopChain takes an input parameter $k$, which is used to control the label set size, label construction time and query processing time.

The main idea is to compute the top $k$ labels based on a ranking defined on a chain cover of the input DAG $G=(V,E)$, for  $L_{in}(v)$ and $L_{out}(v)$ of each $v \in V$. Here, a \emph{chain} $C$ is an ordered sequence of $h$ vertices $C = \langle v_1, v_2, \ldots, v_h \rangle$, such that $v_i \rightarrow v_{i+1}$ for $1 \le i < h$. A \emph{chain cover} of $G$ is a disjoint partition $\{C_1, \ldots, C_l\}$ of $V$, where $C_i$ is a chain for $1 \le i \le l$.

\subsubsection{Definition of Labels}    \label{sssec:def_label}

Given a chain cover $\mathbb{C}=\{C_1, \ldots, C_l\}$ of $G$, and a rank $rank(C)$ for each chain $C \in \mathbb{C}$, the labels of each $v \in V$ are defined as follows.

Define a \textbf{chain code} for $v$ as $code(v)=(v.x, v.y)$, where $v.x=rank(C)$ and $v$ in $C$, and $v.y$ is the position of $v$ in $C$. Let $RC(v) \subseteq \mathbb{C}$ be the set of chains that $v$ can reach; formally, for each $C \in RC(v)$, there exists a vertex $u$ in $C$ such that $v \rightarrow u$. For each $C \in RC(v)$, let $first_v(C)$ be the first vertex in $C$ that $v$ can reach, i.e., if $C = \langle v_1, v_2, \ldots, v_h \rangle$ and $first_v(C)=v_i$, then $v \rightarrow v_j$ for $i \le j \le h$ and $v \nrightarrow v_{j'}$ for $1 \le j' < i$. Note that $first_v(C)=v$, where $v$ in $C$, since we assume $v \rightarrow v$.

Define $RF(v)$ as the set of first reachable vertices in the set of reachable chains from $v$, i.e., $RF(v)=\{first_v(C): C \in RC(v)\}$. Define $RFcode(v) = \{code(u): u \in RF(v)\}$. Sort the chain codes in $RFcode(v)$ in ascending order of $u.x$ for each $(u.x, u.y) \in RFcode(v)$, and let $top_{k}(RFcode(v))$ be the first $k$ chain codes in $RFcode(v)$ and $top_{k}(RFcode(v)) = RFcode(v)$ if $|RFcode(v)|$$\le$$k$. Then, $L_{out}(v)$$=$$top_{k}(RFcode(v))$.

Similarly, let $RC^{-1}(v) \subseteq \mathbb{C}$ be the set of chains that can reach $v$, $last_v(C)$ be the last vertex in $C$ that can reach $v$. Define $RL(v)=\{last_v(C): C \in RC^{-1}(v)\}$. Define $RLcode(v) = \{code(u): u \in RL(v)\}$. Sort the chain codes in $RLcode(v)$ in ascending order of $u.x$ for each $(u.x, u.y) \in RLcode(v)$, and let $top_{k}(RLcode(v))$ be the first $k$ chain codes in $RLcode(v)$ and $top_{k}(RLcode(v)) = RLcode(v)$ if $|RLcode(v)| \le k$. Then, $L_{in}(v)=top_{k}(RLcode(v))$.

\begin{example} \label{eg:def_label}
Figure~\ref{fig:dag} shows a DAG $G$, which is in fact $G$ in Figure~\ref{fig:transform}(b) by relabeling the vertices, and a chain cover of $G$, $\mathbb{C}=\{C_1, C_2, C_3, C_4\}$, where $C_1=\langle v_1,v_2,v_3,v_4 \rangle$, $C_2=\langle v_5,v_6,v_7 \rangle$, $C_3=\langle v_8,v_9,v_{10} \rangle$, $C_4=\langle v_{11},v_{12} \rangle$. The chain code of $v_3$ is $code(v_3)=(1,3)$ since $v_3$ is at the 3-rd position in $C_1$. And $RC(v_3)=\{C_1, C_3, C_4\}$, since $v_3$ can reach $C_1$, $C_3$ and $C_4$. $RF(v_3)=\{v_3,v_8,v_{12}\}$, $RFcode(v_3)=\{(1,3),(3,1),(4,2)\}$, since $code(v_3)=(1,3)$, $code(v_8)=(3,1)$, $code(v_{12})=(4,2)$. Let $k=2$, then $L_{out}(v_3)=top_{2}(RFcode(v_3))$$=$$\{(1,3),(3,1)\}$. Similarly, $RC^{-1}(v_3)$$=$$\{C_1\}$, $RLcode(v_3)$$=$$\{(1,3)\}$, $L_{in}(v_3)$$=$$top_{2}(RLcode(v_3))$$=$$\{(1,3)\}$.
\end{example}

\begin{figure}[!tbp]
\begin{center}
\includegraphics[width = 2.5in]{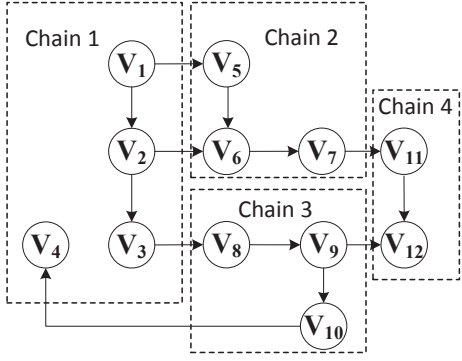}
\caption{Chain cover of $G$ in Figure~\ref{fig:transform}(b)} \label{fig:dag}
\end{center}
\end{figure}


\subsubsection{Algorithm for Labeling}    \label{sssec:alg_label}


We now present our method to compute the above-defined labels, as outlined in Algorithm~\ref{alg:labeling}. We discuss how to compute a chain cover of $G$ and chain ranking in Section~\ref{ssec:chain}. Given a chain cover $\mathbb{C}$ and a rank $rank(C)$ for each chain $C \in \mathbb{C}$, Lines~\ref{ln:selfassign0}-\ref{ln:selfassign} first assign the chain code of each vertex $v$ to both $L_{out}(v)$ and $L_{in}(v)$.



Assume that each vertex in $V$ is assigned a topological order, which can be obtained by topological sort on $G$. Then, Lines~\ref{ln:outlabel0}-\ref{ln:outlabel} compute $L_{out}(v)$ for each $v \in V$ in reverse topological order. For each $v$, the algorithm computes the top $k$ labels with the smallest chain rank from the set of all out-labels of $v$'s out-neighbors and the out-label of $v$. This can be done by scanning $L_{out}(u)$ for each $u \in \Gamma_{out}(v, G)$ as the merge phase in merge-sort until the top $k$ labels are obtained, since the labels in each $L_{out}(u)$ are ordered according to chain rank. In addition, Line~\ref{ln:topk} ensures that at most one vertex from each chain can have its chain code included as one of the top $k$ labels. Then, the top $k$ labels are assigned to $L_{out}(v)$. Similarly, $L_{in}(v)$ for each $v \in V$ is computed in Lines~\ref{ln:inlabel0}-\ref{ln:inlabel}.

\begin{algorithm}[!t]

\SetKwInOut{input}{Input}\SetKwInOut{output}{Output}

{\small

\input{A DAG $G=(V,E)$, a chain cover $\mathbb{C}=\{C_1, \ldots, C_l\}$ of $G$, where $C_i$ is ranked before $C_j$ for $1 \le i < j \le l$, and an integer $k$}
\output{$L_{out}(v)$ and $L_{in}(v)$ for every vertex $v \in V$}

    \ForEach {vertex, $v \in V$,}{  \label{ln:selfassign0}
        Let $C$ be the chain that contains $v$\;
        Assign a chain code $(v.x, v.y)$ to $v$, where $v.x=rank(C)$ and $v.y$ is the position of $v$ in $C$\;
        $L_{in}(v) \gets \{ (v.x, v.y)\}$, $L_{out}(v) \gets \{ (v.x, v.y)\}$\;   \label{ln:selfassign}
    }

    \ForEach {vertex, $v \in V$, in reverse topological order}{ \label{ln:outlabel0}

        Let $L=L_{out}(v) \cup \bigcup_{u \in \Gamma_{out}(v, G)} L_{out}(u)$\;
        Let $L_k$ be the top $k$ labels with the smallest chain rank from $L$ such that: if there exist two labels $(u.x, u.y)$ and $(w.x, w.y)$ in $L$ such that $u.x = w.x$ and $u.y < w.y$, i.e., $u$ and $w$ are in the same chain, then $\underline{(w.x, w.y)} \notin L_k$\;    \label{ln:topk}
        $L_{out}(v) \gets L_k$\;     \label{ln:outlabel}

    }

    \ForEach {vertex, $v \in V$, in topological order}{  \label{ln:inlabel0}

        Let $L=L_{in}(v) \cup \bigcup_{u \in \Gamma_{in}(v, G)} L_{in}(u)$\;
        Let $L_k$ be the top $k$ labels with the smallest chain rank from $L$ such that: if there exist two labels $(u.x, u.y)$ and $(w.x, w.y)$ in $L$ such that $u.x = w.x$ and $u.y < w.y$, i.e., $u$ and $w$ are in the same chain, then $\underline{(u.x, u.y)} \notin L_k$\;
        $L_{in}(v) \gets L_k$\; \label{ln:inlabel}
    }
}
\caption{TopChain: Label Construction}
\label{alg:labeling}	
\end{algorithm}



\begin{lemma} \label{le:labeling}
Algorithm~\ref{alg:labeling} correctly computes $L_{out}(v)$ and $L_{in}(v)$ for every vertex $v \in V$.
\end{lemma}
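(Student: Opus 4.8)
The plan is to prove the claim for $L_{out}$ by induction on the reverse topological order in which the algorithm processes the vertices; the argument for $L_{in}$ is entirely symmetric, swapping $\Gamma_{out}$ for $\Gamma_{in}$, reversing the processing order, and replacing ``first''/smaller position by ``last''/larger position. The inductive hypothesis is that when $v$ is about to be processed, every out-neighbor $u \in \Gamma_{out}(v,G)$ already satisfies $L_{out}(u) = top_{k}(RFcode(u))$; this holds because reverse topological order guarantees each $u$ is processed before $v$. The base case is a sink $v$, for which $RC(v)=\{C_v\}$ (writing $C_v$ for the chain containing $v$), so $RFcode(v)=\{code(v)\}$, which is exactly what the initialization in Lines~\ref{ln:selfassign0}--\ref{ln:selfassign} produces.

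First I would establish the recursive structure of $RF(v)$. Since $v \rightarrow v$ and every vertex $v$ reaches is reached either by staying inside $C_v$ or by first stepping to an out-neighbor, we get $RC(v)=\{C_v\} \cup \bigcup_{u \in \Gamma_{out}(v,G)} RC(u)$. For a fixed reachable chain $C$, any path from $v$ into $C$ enters through some out-neighbor, so the position of $first_v(C)$ equals $\min_u \{\text{position of } first_u(C)\}$ over out-neighbors $u$ with $C \in RC(u)$; for $C = C_v$ the minimizer is $v$ itself, because Lemma~\ref{le:dag} forbids $v$ from reaching any earlier vertex of its own chain (that would close a cycle). Consequently the untruncated set $RFcode(v)$ is obtained from $\{code(v)\} \cup \bigcup_u RFcode(u)$ by keeping, for each chain rank $x$, the code of smallest position $y$ --- which is exactly the per-chain de-duplication rule of Line~\ref{ln:topk} (discard $(w.x,w.y)$ whenever a same-chain $(u.x,u.y)$ with $u.y<w.y$ is present).

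The main obstacle is that the algorithm feeds the \emph{truncated} labels $L_{out}(u)=top_{k}(RFcode(u))$ into this computation rather than the full sets, so I must show no needed code is lost. The key monotonicity fact is that $v \rightarrow u$ implies $RC(u) \subseteq RC(v)$. Hence if a chain $C$ lies among the $k$ smallest-ranked chains of $RC(v)$ and is reached through $u$ (so $C \in RC(u)$), then, because $RC(u)\subseteq RC(v)$, $C$ is also among the $k$ smallest-ranked chains of $RC(u)$, so $code(first_u(C)) \in L_{out}(u)$. Applying this to the out-neighbor attaining the minimum position shows that for every chain in $top_{k}(RFcode(v))$ its correct entry (with the correct minimal position) is present in $L = L_{out}(v)\cup\bigcup_u L_{out}(u)$. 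Conversely, every code appearing in $L$ belongs to some chain of $RC(v)$, so $L$ can introduce no chain of rank smaller than a genuine top-$k$ chain. Since ranks ($x$-values) are fixed per chain and independent of position, these two facts together force $top_{k}(L)=top_{k}(RFcode(v))$.

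Finally I would assemble the induction: combining the recursive characterization with this no-loss property shows that the top-$k$ selection in Lines~\ref{ln:topk}--\ref{ln:outlabel} returns exactly $top_{k}(RFcode(v))=L_{out}(v)$, completing the inductive step. The $L_{in}$ case follows by the dual argument, where the de-duplication rule instead discards the smaller-position code $(u.x,u.y)$ so as to retain $last_v(C)$. I expect the only genuine work to be the careful verification of the monotonicity/no-loss step, since the remainder is bookkeeping over the definitions.
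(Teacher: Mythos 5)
Your proof is correct and takes essentially the same route as the paper's: both hinge on the monotonicity fact that $v \rightarrow u$ implies $RC(u) \subseteq RC(v)$, so any chain among the $k$ smallest-ranked for $v$ is also among the $k$ smallest-ranked for the out-neighbor through which it is reached, and hence its code survives every truncation; the spurious-code case is likewise dispatched by the per-chain de-duplication rule in both arguments. The only difference is organizational --- the paper propagates a fixed label $(u.x,u.y)$ backward along a path from $v$ to $u \in RF_k(v)$, whereas you run an explicit induction on reverse topological order with a recursive characterization of $RF(v)$ --- which does not change the substance.
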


\begin{proof}
We first prove that Algorithm~\ref{alg:labeling} correctly computes $L_{out}(v)$ for every $v \in V$. Let $RF_k(v) = \{u: (u.x, u.y) \in top_{k}(RFcode(v))\}$. We prove that for every vertex $u \in RF_k(v)$, $(u.x, u.y) \in L_{out}(v)$ when Algorithm~\ref{alg:labeling} terminates.

Since $u \in RF_k(v)$ implies $v \rightarrow u$, there exists a path $\langle v=u_1, u_2, \ldots, u_p = u \rangle$ in $G$. First, we want to prove that $u \in RF_k(u_i)$ for $1 \le i \le p$. We prove by contradiction as follows. Assume to the contrary $u \notin RF_k(u_i)$ for some $i$ where $1 \le i \le p$. Then, since $v \rightarrow u_i$ implies $RC(u_i) \subseteq RC(v)$, by the definition of $RF_k(.)$, and $u_i \rightarrow u$, we have $u \notin RF_k(v)$, which is a contradiction and thus $u \in RF_k(u_i)$ for $1 \le i \le p$.

Then, we show $(u.x, u.y) \in L_{out}(u_i)$ for $1 \le i \le p$. Since $u \in RF_k(u_p)=RF_k(u)$, we have $(u.x, u.y) \in L_{out}(u)$, which is computed in Line~\ref{ln:selfassign}. Similarly, when we consider vertex $u_{p-1}$, Algorithm~\ref{alg:labeling} computes $L_{out}(u_{p-1})$ by merging it with $L_{out}(u=u_p)$, and thus $(u.x, u.y)$ will be added to $L_{out}(u_{p-1})$ since $u \in RF_k(u_{p-1})$. Again, since $u \in RF_k(u_{p-1})$, $(u.x, u.y)$ will be kept in the final $L_{out}(u_{p-1})$. Repeating the analysis for $u_i$, where $1 \le i < p-1$, we can conclude that $(u.x, u.y) \in L_{out}(u_i)$, for $1 \le i \le p$. Thus, $(u.x, u.y) \in L_{out}(v)$ for every $u \in RF_k(v)$.

Since Algorithm~\ref{alg:labeling} keeps at most $k$ labels for $L_{out}(v)$, Algorithm~\ref{alg:labeling} correctly computes $L_{out}(v) = top_{k}(RFcode(v))$ if $|RFcode(v)|$ $\ge k$. If $|RFcode(v)| < k$, we need to show that for any $w \notin RF_k(v)$, $(w.x, w.y) \notin L_{out}(v)$. Suppose to the contrary that $(w.x, w.y) \in L_{out}(v)$ as computed by Algorithm~\ref{alg:labeling}, then $w$ must be a descendant of $v$ and hence $v \rightarrow w$. Since $w \notin RF_k(v)$ and $v \rightarrow w$, there must be another vertex $u \in RF_k(v)$ such that $u$ and $w$ are in the same chain $C$ and $u$ is the first vertex in $C$ that is reachable from $v$, i.e., $u.x = w.x$ and $u.y < w.y$. However, according to Line~\ref{ln:topk}, $(w.x, w.y)$ will not be added to $L_{out}(v)$ in this case. Thus, Algorithm~\ref{alg:labeling} also correctly computes $L_{out}(v) = top_{k}(RFcode(v))$ when $|RFcode(v)| < k$.

Similarly, we can prove that Algorithm~\ref{alg:labeling} correctly computes $L_{in}(v)$ for every $v \in V$.
\end{proof}


\begin{theorem} \label{th:labeling}
Given a DAG $G=(V,E)$ and a chain cover of $G$, Algorithm~\ref{alg:labeling} correctly computes $L_{out}(v)$ and $L_{in}(v)$ for all $v \in V$ in $O(k(|V|+|E|))$ time, and the total label size is given by $\sum_{v \in V}(|L_{out}(v)| + |L_{in}(v)|) = O(k|V|)$.
\end{theorem}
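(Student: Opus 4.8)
The plan is to decompose the statement into three independent claims: correctness, the $O(k|V|)$ bound on the total label size, and the $O(k(|V|+|E|))$ bound on the running time. Correctness is already established, since Lemma~\ref{le:labeling} shows that Algorithm~\ref{alg:labeling} computes exactly $L_{out}(v)=top_{k}(RFcode(v))$ and $L_{in}(v)=top_{k}(RLcode(v))$ for every $v\in V$, so I would simply invoke it. The size bound is then immediate from the algorithm's invariant: Line~\ref{ln:outlabel} and its in-label counterpart Line~\ref{ln:inlabel} each assign to a label set the top $k$ entries of a merged list, so $|L_{out}(v)|\le k$ and $|L_{in}(v)|\le k$ for every $v$, and summing gives $\sum_{v\in V}(|L_{out}(v)|+|L_{in}(v)|)\le 2k|V|=O(k|V|)$.

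The remaining work is the time bound, which I would break down by the phases of Algorithm~\ref{alg:labeling}. First, after a single scan of the ranked chain cover $\mathbb{C}$ to record, for each vertex, its chain, chain rank, and position within the chain, Lines~\ref{ln:selfassign0}--\ref{ln:selfassign} assign the chain code and initialize $L_{in}(v)$ and $L_{out}(v)$ in $O(1)$ per vertex, for $O(|V|)$ total. A topological order of $G$ is computed once in $O(|V|+|E|)$. Then I would analyze the out-label loop (Lines~\ref{ln:outlabel0}--\ref{ln:outlabel}); the in-label loop (Lines~\ref{ln:inlabel0}--\ref{ln:inlabel}) is symmetric and costs the same. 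For a fixed $v$, the loop merges the sorted lists $L_{out}(v)$ and $L_{out}(u)$ over all $u\in\Gamma_{out}(v,G)$. Each of these $d_{out}(v,G)+1$ lists is sorted by chain rank and, by the invariant maintained by the algorithm, already contains at most one entry per chain and at most $k$ entries in total, so the number of labels examined at $v$ is $O(k(d_{out}(v,G)+1))$.

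The key step, and the main obstacle, is to argue that the top-$k$ extraction at each $v$ runs in time linear in the number of labels it examines, namely $O(k(d_{out}(v,G)+1))$, without incurring an extra factor. I would implement Line~\ref{ln:topk} as a standard multi-way merge of the $d_{out}(v,G)+1$ already-sorted lists that emits entries in nondecreasing order of chain rank and terminates as soon as $k$ distinct chains have been output. The chain-uniqueness requirement of Line~\ref{ln:topk} is handled on the fly: because the merge produces entries in rank order, all entries sharing a chain rank $x$ appear consecutively, so the rule ``keep the smallest-$y$ entry of each chain'' is enforced by an $O(1)$ comparison per examined entry. Since each list is internally chain-distinct, the number of entries scanned before $k$ distinct chains are collected is at most the total input size $O(k(d_{out}(v,G)+1))$. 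Treating $k$ as the small constant it is assumed to be, any per-comparison bookkeeping (for instance the logarithmic cost of a priority queue over the $d_{out}(v,G)+1$ list heads) is absorbed into the constant, so the work at $v$ is $O(k(d_{out}(v,G)+1))$.

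Summing over all vertices, the out-label loop costs $\sum_{v\in V}O(k(d_{out}(v,G)+1))=O(k|V|+k\sum_{v\in V}d_{out}(v,G))=O(k(|V|+|E|))$, using $\sum_{v\in V}d_{out}(v,G)=|E|$; the in-label loop gives the same bound with $d_{in}$. Adding the $O(|V|+|E|)$ for initialization and topological sort, the total running time is $O(k(|V|+|E|))$, which together with the size and correctness claims completes the proof. The only delicate point to make fully rigorous is the linear-time merge justification above; everything else is a direct accounting of per-vertex and per-edge work.
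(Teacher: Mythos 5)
Your proof is correct and follows essentially the same route as the paper's: correctness is delegated to Lemma~\ref{le:labeling}, the size bound follows from the per-vertex cap of $k$ labels, and the running time is obtained by charging $O(k \cdot d_{out}(v,G))$ (resp.\ $O(k \cdot d_{in}(v,G))$) to each vertex for the merge in Line~\ref{ln:topk} and summing to $O(k|E|)$ plus $O(|V|+|E|)$ for initialization and topological sort. You are in fact more explicit than the paper about why the top-$k$ extraction is linear in the number of examined labels; the only blemish is the remark that a priority queue over the $d_{out}(v,G)+1$ list heads has its logarithmic cost ``absorbed into the constant'' --- that factor is $\log d_{out}(v,G)$, not a function of $k$, so it is not constant; a $k$-round scan of the list heads (or linear-time selection over the $O(k\,d_{out}(v,G))$ concatenated entries) yields the claimed bound cleanly.
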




\begin{proof}
The time complexity of Algorithm~\ref{alg:labeling} is dominated by the two for-loops in Lines~\ref{ln:outlabel0}-\ref{ln:outlabel} and Lines~\ref{ln:inlabel0}-\ref{ln:inlabel}. Topological sort can be computed in $O(|V|+|E|)$ time. The cost of Line~\ref{ln:topk} is $O(k \times d_{out}(v,G))$, since each $L_{out}(u)$ is bounded by $k$. Thus, the total cost of Lines~\ref{ln:outlabel0}-\ref{ln:outlabel} is given by $\sum_{v \in V} O(k \times d_{out}(v,G)) = O(k |E|)$. Similarly, the cost of Lines~\ref{ln:inlabel0}-\ref{ln:inlabel} is also $O(k |E|)$. Summing up, the total complexity is given by $O(k(|V|+|E|))$.

Since for every $v \in V$, $|L_{in}(v)| \le k$ and $|L_{out}(v)| \le k$, the total label size is bounded by $O(k|V|)$.
\end{proof}

We show how Algorithm~\ref{alg:labeling} computes the labels as follows.

\begin{example} \label{eg:alg_label}
Given the chain cover $\mathbb{C}=\{C_1, C_2, C_3, C_4\}$ of $G$ in Figure~\ref{fig:dag}, and $k=2$, Algorithm~\ref{alg:labeling} first initializes $L_{in}(v)$ and $L_{out}(v)$ to contain $v$ itself. One topological order is as follows: $\langle v_1,v_2,v_3,v_5,v_6,v_7,v_8,v_9,v_{10},v_4,v_{11},v_{12} \rangle$. Then, in reverse topological order, we compute $L_{out}(v)$ for every $v$. We show how $L_{out}(v_{3})$ is computed. First, we compute $L_{out}(v_{12})=code(v_{12})$ $=\{(4,2)\}$ and $L_{out}(v_4)=code(v_4)=\{(1,4)\}$. Next, we compute $L_{out}(v_{10})=code(v_{10}) \cup L_{out}(v_4) =\{(1,4),(3,3)\}$, $L_{out}(v_{9})=top_2(code(v_9) \cup L_{out}(v_{10})$ $\cup L_{out}(v_{12}))=top_2\{(3,2),$ $(1,4),(3,3),(4,2)\}=\{(1,4),(3,2)\}$, $L_{out}(v_{8})=top_2(code(v_8) \cup L_{out}(v_{9})) =\{(1,4),(3,1)\}$, $L_{out}(v_{3})$ $=top_2(code(v_3) \cup L_{out}(v_{8}))$ $=\{(1,3),(3,1)\}$. Similarly, we compute $L_{in}(v_{3})=\{(1,3)\}$.
\end{example}

\subsection{Chain Cover and Chain Ranking} \label{ssec:chain}


One input to Algorithm~\ref{alg:labeling} is a chain cover of $G$. An optimal chain cover, which is one that consists of the minimum number of chains, can be computed by a min-flow based method in $O(|V|^{3})$ time~ \cite{Jagadish90tods}. And the time is reduced to $O(|V|^{2}+|V|l\sqrt{l})$ based on bipartite matching~\cite{ChenC08icde}, where $l$ is the number of chains. Both of these two methods are too expensive for processing large graphs.

Since our labeling scheme presented in Section~\ref{ssec:labeling} is not limited to a transformed graph but any general DAG, we apply a greedy algorithm~\cite{Simon88tcs} to compute a chain cover if the application is to answer reachability queries in a non-temporal graph. The greedy algorithm grows a chain by recursively adding the smallest-ranked out-neighbor of the last vertex in the chain, where the ranking is defined based on a topological ordering of the vertices. The algorithm uses $O(|V|\log d_{max} + |E|)$ time, where $d_{max}$ is the maximum degree of a vertex in the DAG. 

For processing a temporal graph $\mathcal{G}$, we adopt a simple and efficient method based on the property of $\mathcal{G}$ as follows. In the transformed graph $G=(V,E)$ of $\mathcal{G}=(\mathcal{V},\mathcal{E})$, each set of vertices $V_{in}(v)$ or $V_{out}(v)$ naturally appears as a chain. Thus, we can obtain a natural chain cover of $G$, i.e., $\mathbb{C}=\{V_{in}(v): v \in \mathcal{V}\} \cup \{V_{out}(v): v \in \mathcal{V}\}$. In fact, we can reduce the number of chains in $\mathbb{C}$ by half as follows. We merge $V_{in}(v)$ and $V_{out}(v)$ into one single chain in ascending order of the time stamp of the vertices. If there exist $\langle v,t_{in} \rangle \in V_{in}(v)$ and $\langle v,t_{out} \rangle \in V_{out}(v)$ such that $t_{in}=t_{out}$, we order $\langle v,t_{in} \rangle$ before $\langle v,t_{out} \rangle$. 


Merging $V_{in}(v)$ and $V_{out}(v)$ into a single chain $C$ can approximately reduce query response time by half, since logically $V_{in}(v)$ and $V_{out}(v)$ belong to a single vertex $v$ in the original temporal graph. For example, the chain cover in Figure~\ref{fig:dag} is constructed in this way. 


However, theoretically there is one small problem, which can be easily fixed, though we need to show a rigorous proof to show the correctness of our indexing method. We first present the problem as follows. Let $C=\langle u_1, u_2, \ldots, u_h \rangle$. The definition of chain requires that $u_i \rightarrow u_j$ for $1\le i < j \le h$. However, in a temporal graph $\mathcal{G}$, it is possible that a vertex cannot reach itself, i.e., $u_i$ may not reach $u_j$ in the transformed graph $G$, for some $u_i \in V_{out}(v)$ and $u_j \in V_{in}(v)$. Essentially, merging $V_{in}(v)$ and $V_{out}(v)$ into a single chain $C$ creates a new graph $G_{new}$, by adding an edge from $\langle v,t_{out} \rangle \in V_{out}(v)$ to $\langle v,t_{in} \rangle \in V_{in}(v)$ to $\mathcal{G}$ for each $v \in \mathcal{V}$ if $t_{out} < t_{in}$.

However, $G_{new}$ only exists conceptually used to define the chain cover, and we never really use $G_{new}$ in our algorithm for label construction. Note that we do not compute the chain cover from $G_{new}$, but simply form a chain $C$ from each $V_{in}(v)$ and $V_{out}(v)$ in $G$, although the reachability of the vertices in $C$ is defined based on $G_{new}$ instead of $G$.

In the following theorem, we show the correctness of Algorithm~\ref{alg:labeling} when the input is $G$ and the chain cover is based on $G_{new}$, even though there exists false reachability information in $G_{new}$. In Section~\ref{ssec:timequery}, we will also show that the labels give correct answers to time-based queries.


 %




%



\begin{theorem} \label{th:mergechain}




Let $\mathbb{C}$ be the chain cover defined based on $G_{new}$, where each $V_{in}(v)$ and $V_{out}(v)$ in $G_{new}$ are merged into a single chain $C \in \mathbb{C}$. Given $G$ and $\mathbb{C}$ as input, Algorithm~\ref{alg:labeling} constructs the same labels as the labels defined based on $G_{new}$ and $\mathbb{C}$ (i.e., constructed by Algorithm~\ref{alg:labeling} with $G_{new}$ and $\mathbb{C}$ as input).
\end{theorem}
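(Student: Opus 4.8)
The plan is to reduce the claim to a statement purely about reachability and then verify that statement through the structure of the transformed graph. First I would observe that the correctness argument of Lemma~\ref{le:labeling} never uses the fact that consecutive vertices of a chain reach one another; it only manipulates the reachability relation together with the ordering of each $C \in \mathbb{C}$ through $first_v(C)$ (resp.\ $last_v(C)$) and the per-chain deduplication of Line~\ref{ln:topk}. Hence Algorithm~\ref{alg:labeling} computes $top_{k}(RFcode(v))$ and $top_{k}(RLcode(v))$ for \emph{any} input DAG paired with \emph{any} ordered partition of its vertices, where $first$ and $last$ are interpreted with respect to the reachability of that DAG. Since $G_{new}$ is a DAG (the added edges go from $\langle v,t_{out}\rangle$ to $\langle v,t_{in}\rangle$ with $t_{out}<t_{in}$, so a return path would have to strictly decrease time and cannot exist, by the same time-monotonicity used in Lemma~\ref{le:dag}), this applies to both runs. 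It therefore suffices to prove that the derived sets coincide, i.e.\ $RF^{G}(v)=RF^{G_{new}}(v)$ and $RL^{G}(v)=RL^{G_{new}}(v)$ for every $v$; the labels, being $top_k$ of the corresponding codes, are then identical.

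Next I would isolate the structural facts that drive everything. Every edge of $G$ is time-nondecreasing, and strictly increasing whenever its endpoints belong to different original vertices (this is exactly the inequality established in the proof of Lemma~\ref{le:dag}), and the edges added in $G_{new}$ are strictly time-increasing and only ever run from an out-copy $\langle v,t_{out}\rangle$ to an in-copy $\langle v,t_{in}\rangle$ of the \emph{same} original vertex. The second fact I need is that an in-copy $\langle v,t_{in}\rangle$ can leave $v$ only through an out-copy $\langle v,s\rangle$ with $s\ge t_{in}$: its out-neighbours in $G$ are later in-copies (via the ordering of $V_{in}(v)$) and the step-2b out-copy of time $\ge t_{in}$, and all external temporal edges emanate from out-copies. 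Consequently $\langle v,t_{out}\rangle$ already reaches, \emph{in} $G$, every out-copy of $v$ of time $>t_{out}$ along the $V_{out}(v)$ chain, hence every external vertex that $\langle v,t_{in}\rangle$ can reach.

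The heart of the proof is then a strong induction, processing vertices in a reverse topological order of $G_{new}$ (which is also one of $G$, as $E(G)\subseteq E(G_{new})$), to show $first^{G_{new}}_y(C)=first^{G}_y(C)$ for every $y$ and every chain $C$. The only place the two runs can diverge at $y$ is when $y=\langle v,t_{out}\rangle$ carries an extra out-neighbour $\langle v,t_{in}\rangle$, which is processed earlier and hence satisfies the inductive hypothesis. I would split that neighbour's contribution by target chain. For the chain $C_v$ containing $y$, everything reachable from $\langle v,t_{in}\rangle$ inside $C_v$ has timestamp $>t_{out}$ and hence a strictly later position than $y$, so $first_y(C_v)=y$ is unaffected (and the deduplication of Line~\ref{ln:topk}, which keeps the smaller position, discards it). For any other chain $C'$, the first external hop out of $\langle v,t_{in}\rangle$ passes through some out-copy $\langle v,s\rangle$ with $s\ge t_{in}>t_{out}$, which $y$ already reaches in $G$; thus $y$ reaches $first_{\langle v,t_{in}\rangle}(C')$ by a genuine $G$-path, so $C'$ was already reachable from $y$ in $G$ and $first^{G}_y(C')$ sits at a position no later than the one the extra edge would contribute. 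In both cases the extra out-neighbour introduces neither a new reachable chain nor an earlier first vertex, giving $RF^{G}=RF^{G_{new}}$; the statement for $RL$ follows by the dual argument on the reverse graph, whose structure is symmetric. I expect this ``other chain'' case to be the main obstacle, since it is where one must rule out the added edges creating spurious first-reachable vertices, and it is precisely the observation that an in-copy's external reachability is already routed through out-copies the source vertex can reach that makes it go through.
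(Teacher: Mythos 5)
Your proposal is correct and follows essentially the same route as the paper's proof: the decisive observation in both is that the only false chain edges run from an out-copy $\langle v,t_{out}\rangle$ to a later in-copy $\langle v,t_{in}\rangle$ of the same original vertex, and any external continuation from that in-copy must pass through an out-copy $\langle v,s\rangle$ with $s\ge t_{in}>t_{out}$ that the source already reaches in $G$, so $RF$ and $RL$ (hence the labels) are unchanged. The paper packages this as a direct case analysis on pairs of vertices (both outside $C$, exactly one in $C$) rather than your induction in reverse topological order, but the content is the same.
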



\begin{proof}
Let $C = \langle u_1, u_2, \ldots, u_h \rangle$ be the chain obtained by merging $V_{in}(v)$ and $V_{out}(v)$, for some $v$ in the original temporal graph $\mathcal{G}$, such that there exist $u_i$ and $u_j$, where $1 \le i < j \le h$, $u_i$ cannot reach $u_j$ in $G$. We show that the false reachability information ``$u_i \rightarrow u_j$'' presented in $C$ does not change $RF(r)$ and $RL(r)$ for any vertex $r \in V$.

Let $r = \langle a, t_a \rangle$, and consider another vertex $s=\langle b, t_b \rangle \in V$, where $a \ne b$. We first prove that $r$ can reach $s$ (or $r$ is reachable from $s$) in $G$ if and only if $r$ can reach $s$ (or $r$ is reachable from $s$) in $G_{new}$, i.e., the false information ``$u_i \rightarrow u_j$'' presented in $C$ does not affect the reachability between $r$ and $s$ in $G$.

We first show that if $r$ and $s$ are not in $C$, then the false information ``$u_i \rightarrow u_j$'' presented in $C$ does not affect the reachability between $r$ and $s$ in $G$. According to the construction of $G$, only vertices in $V_{in}(v)$ can have in-neighbors that are not vertices in $C$, while only vertices in $V_{out}(v)$ can have out-neighbors that are not vertices in $C$. Thus, we only need to show that for any $u_{in} = \langle v, t_{in} \rangle \in V_{in}(v)$ and any $u_{out} = \langle v, t_{out} \rangle \in V_{out}(v)$, $u_{in}$ is ordered before $u_{out}$ in $C$ if and only if $u_{in}$ can reach $u_{out}$ in $G$. First, if $u_{in}$ can reach $u_{out}$ in $G$, then $t_{in} \le t_{out}$ according to the construction of $G$, which means that $u_{in}$ is ordered before $u_{out}$ in $C$. Next, if $u_{in}$ is ordered before $u_{out}$ in $C$, then $t_{in} \le t_{out}$ since $V_{in}(v)$ and $V_{out}(v)$ are merged in ascending order of the time stamp of the vertices. According to the construction of the transformed graph $G$, for any vertex $\langle v, t_1 \rangle \in V_{in}(v)$ and for any $\langle v, t_2 \rangle \in (V_{in}(v) \cup V_{out}(v))$, if $t_1 \le t_2$, there is a path from $\langle v, t_1 \rangle$ to $\langle v, t_2 \rangle$ in $G$. Thus, $u_{in}$ can reach $u_{out}$ in $G$.

Now we consider the case that only $r$ is in $C$. If $r \in V_{in}(v)$, then the same result follows from the above analysis. If $r \in V_{out}(v)$ and $r \rightarrow s$, then clearly $r \rightarrow s$ regardless of ``$u_i \rightarrow u_j$''. If $r \in V_{out}(v)$ and $r \nrightarrow s$, then ``$u_i \rightarrow u_j$'' also does not give $r \rightarrow s$, which we prove by contradiction as follows. Suppose now $r \rightarrow s$ because of ``$u_i \rightarrow u_j$''. The path from $u_j$ to $s$ must pass through a vertex $u_{j'} \in V_{out}(v)$, where $j \le j' \le h$. However, according to the construction of $G$, if $u_i$ cannot reach $u_j$ in $G$ and $i < j$, then $u_i \in V_{out}(v)$ and $u_j \in V_{in}(v)$. Thus, we have $u_i \rightarrow u_{j'}$ and hence $r \rightarrow s$ even if $u_i \nrightarrow u_j$, which is a contradiction.

The case that only $s$ is in $C$ can be proved similarly. And since $a \ne b$, $r$ and $s$ cannot be both in $C$. Thus, the false information ``$u_i \rightarrow u_j$'' presented in $C$ does not affect the reachability between $r$ and $s$ in $G$. Since $r \in RF(r)$ and $r \in RL(r)$ regardless of ``$u_i \rightarrow u_j$'', $RF(r)$ and $RL(r)$ remain unchanged.

Since the top $k$ labels are selected from $RF(.)$ and $RL(.)$, and the ranking of the chain is computed based on $\mathbb{C}$, we can conclude that Algorithm~\ref{alg:labeling} constructs the same labels as the labels defined based on $G_{new}$.
\end{proof}

This chain cover can be naturally computed at no extra cost during the process of graph transformation, and thus the whole process takes only linear time. In addition, for the chain code $(v.x, v.y)$ of each vertex $\langle v,t \rangle \in V$, instead of assigning $v.y$ as the position of $v$ in its chain, we can directly use the time stamp of $v$, i.e., $v.y=t$. This new assignment of $v.y$ is in fact significant when update maintenance of the labels is considered. There can be frequent edge insertions in a temporal graph over time and in this case the labels need to be updated as well. If $v.y$ is assigned as the position of $v$ in its chain, then updating the labels is more difficult since inserting a vertex in a chain affects the position of all following vertices in the chain, which can in turn affect the labels of a large number of vertices in the graph. On the other hand, if $v.y=t$, then we can simply insert the vertex in the chain and $u.y$ for any vertex $u$ following $v$ in the chain needs not be updated. Dynamic update of labels will be discussed in Section~\ref{ssec:update}.


Each chain in $\mathbb{C}$ is assigned a rank for labeling. There are many different strategies to rank the chains. We only discuss strategies with a low computation cost, that is, they are practical for large graphs. Two such strategies are discussed as follows.

\begin{itemize}
  \item Random ranking: We rank the chains randomly. We use this method as a baseline.
  \item Ranking by degree: Let $\Phi(C)$ denote the sum of out-edges and in-edges of all the vertices in a chain $C$. We rank the chains in descending order of their $\Phi$ value, where the top-ranked chain has a rank of 1. The rationale for this ranking is that the higher the value of $\Phi(C)$, the higher is the probability that $C$ can reach and are reachable from a larger set of vertices in $G$. Thus, assigning a top rank to $C$ enables more vertices to contain $rank(C)$ in their labels, thus allowing a more efficient query processing. We use this method in our TopChain method. We apply radix sort to sort the chains in order to assign ranks, and hence maintain the linear index construction time complexity.
\end{itemize}


\subsection{Dynamic Update of Labels} \label{ssec:update}


New edges and vertices may be added to a temporal graph $\mathcal{G}$ over time. Since adding an isolated vertex is trivial, we only discuss the addition of a new edge $e=(a, b, t, \lambda)$. We need to update $G$ by inserting $u=\langle a, t \rangle$ into $V_{out}(a)$ and $v=\langle b, t+\lambda \rangle$ into $V_{in}(b)$, and adding an edge from $u$ to $v$. Consequently, the labels should be updated as follows.

First, we need to insert $u$ into the chain $C$ that is formed from $V_{out}(a)$ and $V_{in}(a)$. If $C$ does not exist, we create $u$ as a new chain, assign it a rank $l$ that is larger than that of existing chains, and initialize $L_{out}(u)$$=$$L_{in}(u)$$=$$(l,t)$. If $C$ exists, we insert $u$ into the right position in $C$ according to $t$, and initialize $L_{out}(u)$$=$$L_{in}(u)$$=$ $(rank(C),t)$. Let $u_1$ be the vertex ordered before $u$ in $C$ and $u_2$ be the vertex ordered after $u$ in $C$. We compute $L_{in}(u)$ as the top $k$ labels from $L_{in}(u) \cup L_{in}(u_1)$, and $L_{out}(u)$ as the top $k$ labels from $L_{out}(u) \cup L_{out}(u_2)$. Similarly, we compute $L_{out}(v)$ and $L_{in}(v)$.

Second, after inserting a new edge $(u,v)$ into $G$, we update the labels as follows. We perform a reverse BFS starting from vertex $u$ in $G$ to update the out-labels of vertices that are visited, since only these vertices may change their out-labels. For any vertex $w$ visited, let $w'$ be the parent of $w$ in the reverse BFS, we update $L_{out}(w)$ as the top $k$ labels from $L_{out}(w) \cup L_{out}(w')$. If $L_{out}(w)$ remains unchanged, then we do not continue the search from $w$. Similarly, we conduct a BFS starting from vertex $v$ to update the in-labels of the visited vertices.




\if 0

\begin{algorithm}[!t]

\SetKwInOut{input}{Input}\SetKwInOut{output}{Output}

{\small
\input{A DAG graph $G=(V,E)$, a new inserted edge $(u,v)$}

    $L_{out}(u) \gets$ the top-$k$ vertices with smallest chain rank among $L_{out}(v) \cup L_{out}(u)$\;
    \If{$L_{out}(u)$ changes}{
        push $u$ into an empty queue $Q$\;
    }
    \While {$Q$ is not empty}{
        $w \gets Q.pop()$\;
        \ForEach {vertex $z \in \Gamma_{in}(w, G)$}{
            $L_{out}(z) \gets$ the top-$k$ vertices with smallest chain rank among $L_{out}(z) \cup L_{out}(w)$\;
            \If{$L_{out}(z)$ changes}{
                push $z$ into queue $Q$\;
            }
        }
    }

    $L_{in}(v) \gets$ the top-$k$ vertices with smallest chain rank among $L_{in}(v) \cup L_{in}(u)$\;
    \If{$L_{in}(v)$ changes}{
        push $v$ into an empty queue $Q$\;
    }
    \While {$Q$ is not empty}{
        $w \gets Q.pop()$\;
        \ForEach {vertex $z \in \Gamma_{out}(w, G)$}{
            $L_{in}(z) \gets$ the top-$k$ vertices with smallest chain rank among $L_{in}(z) \cup L_{in}(w)$\;
            \If{$L_{in}(z)$ changes}{
                push $z$ into queue $Q$\;
            }
        }
    }

}
\caption{Maintain CR labels when a new edge is inserted}
\label{alg:update}	
\end{algorithm}

\fi


The algorithm completes label updating in $O(k(|V|+|E|))$ time, which is the optimal worst case time. In practice, the update is very efficient, as we demonstrate by experiments.



\section{Query Processing by TopChain}  \label{sec:query}

We now discuss how we use the labels constructed in Section~\ref{sec:index} to answer reachability queries and minimum temporal path queries.

\subsection{Reachability Queries}   \label{ssec:reachability}


We process a reachability query whether $u \rightarrow v$ as shown in Algorithm~\ref{alg:query}. We first define a few operators used in the algorithm.

\begin{algorithm}[!t]

\SetKwInOut{input}{Input}\SetKwInOut{output}{Output}

{\small
\input{A DAG $G=(V,E)$, $\mathbb{L}=\{(L_{out}(v),L_{in}(v)): v \in V\}p$, and a pair of query vertices $(u,v)$}
\output{The answer whether $u$ can reach $v$}
    \If {$u.x = v.x$}{  \label{ln:samechain0}
        \If{$u.y \le v.y$}{ \label{ln:samechain1}
            \Return true\; \label{ln:samechain2}
        }
        \Return false\; \label{ln:samechain}
    }
    \If {$L_{out}(u) \gg L_{out}(v)$ or $L_{in}(v) \gg L_{in}(u)$} {    \label{ln:gg0}
        \Return false\;     \label{ln:gg}
    }
    \If {$L_{out}(u) \oplus L_{in}(v) = 1$}{    \label{ln:oplus0}
        \Return true\;         \label{ln:oplus}
    }

    \ForEach {$w \in \Gamma_{out}(u,G)$}{ \label{ln:bfs0}
        \If{$w$ has been not visited}{
            \If{\textbf{ReachQ}$(G, \mathbb{L}, (w,v))$ returns true} {
                \Return true\;  \label{ln:bfs}
            }
        }
    }

    \Return false\;

}
\caption{\textbf{ReachQ}$(G, \mathbb{L}, (u,v))$: Reachability Querying}
\label{alg:query}	
\end{algorithm}

We first define the operator $\oplus$:
\begin{displaymath}
L_{out}(u) \oplus L_{in}(v) = \left\{ \begin{array}{ll}
1 & \begin{split} \textrm{if } \exists (r.x, r.y) \in L_{out}(u),\\ (s.x, s.y) \in L_{in}(v), \textrm{s.t.} \\ r.x = s.x \textrm{ and }  r.y \le s.y \\ \end{split} \\
0 & \textrm{otherwise}
\end{array} \right.
\end{displaymath}
Intuitively, $L_{out}(u) \oplus L_{in}(v)$ tests whether there exist two vertices $r$ and $s$, where $(r.x, r.y) \in L_{out}(u)$ and $(s.x, s.y) \in L_{in}(v)$, such that $r$ and $s$ are in the same chain, and either $r=s$ or $r$ is ordered before $s$ in the chain. The following lemma shows how the operator can be used in reachability query processing.


\begin{lemma} \label{le:oplus}
If $L_{out}(u) \oplus L_{in}(v)=1$, then $u \rightarrow v$.
\end{lemma}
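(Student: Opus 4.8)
The plan is to unfold the definition of the operator $\oplus$, trace the two matching chain codes back to concrete vertices, and then assemble a single reachability path $u \rightarrow v$ that passes through the shared chain. Concretely, assuming $L_{out}(u) \oplus L_{in}(v) = 1$, the definition of $\oplus$ hands me codes $(r.x, r.y) \in L_{out}(u)$ and $(s.x, s.y) \in L_{in}(v)$ with $r.x = s.x$ and $r.y \le s.y$. I write $r$ and $s$ for the vertices carrying these codes.

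Next I would read off a reachability fact at each endpoint from the label definitions. Because $L_{out}(u) = top_{k}(RFcode(u)) \subseteq RFcode(u)$, the code $(r.x, r.y)$ equals $code(first_u(C))$ for some $C \in RC(u)$, so $r = first_u(C)$ and thus $u \rightarrow r$. Symmetrically, $L_{in}(v) \subseteq RLcode(v)$ gives $s = last_v(C')$ for some $C' \in RC^{-1}(v)$, whence $s \rightarrow v$. The equality $r.x = s.x$ forces $C = C'$, since distinct chains are assigned distinct ranks and $v.x = rank(\cdot)$; hence $r$ and $s$ sit on the same chain, and $r.y \le s.y$ places $r$ at or before $s$ along it. The defining property of a chain $\langle v_1, \ldots, v_h \rangle$ (that $v_i \rightarrow v_{i+1}$) then gives $r \rightarrow s$, and transitivity of reachability in the DAG yields $u \rightarrow r \rightarrow s \rightarrow v$, i.e. $u \rightarrow v$.

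The argument is essentially bookkeeping, so the only delicate point is justifying that every code stored in a label really is the code of a first (resp. last) reachable vertex on some chain, together with $u$ reaching it (resp. $v$ being reached by it); this is precisely what the definitions of $RF(\cdot)$, $RL(\cdot)$ and their top-$k$ truncations provide, so I do not expect a genuine obstacle. One caveat I would flag is that for the transformed graph the chain reachability $r \rightarrow s$ lives in the conceptual graph $G_{new}$ in which $V_{in}(v)$ and $V_{out}(v)$ are merged; but Theorem~\ref{th:mergechain} already guarantees that the labels Algorithm~\ref{alg:labeling} builds coincide with those defined over $G_{new}$, so this introduces no additional difficulty here.
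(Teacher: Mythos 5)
Your proposal is correct and follows exactly the paper's (one-line) argument: the matching codes yield $u \rightarrow r$, $r \rightarrow s$ along the shared chain, and $s \rightarrow v$, whence $u \rightarrow v$ by transitivity. The extra caveat you flag about merged chains in the transformed graph is not needed for this lemma as stated for a general DAG with a genuine chain cover, and the paper indeed defers that issue to Theorems~\ref{th:mergechain} and~\ref{th:query2}.
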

\begin{proof}
If $L_{out}(u) \oplus L_{in}(v)=1$, then it implies $u \rightarrow r$, $r \rightarrow s$, $s\rightarrow v$, and thus $u \rightarrow v.$
\end{proof}


The following example illustrates how the operator $\oplus$ works.
\begin{example} \label{eg:oplus}
Consider $G$ in Figure ~\ref{fig:dag} and $k=2$, Algorithm~\ref{alg:labeling} computes $L_{out}(v_3)=\{(1,3),(3,1)\}$ and $L_{in}(v_{12})=\{(1,3),(3,2)\}$. Consider a reachability query that asks whether $v_3 \rightarrow v_{12}$. Since $L_{out}(v_{3}) \oplus L_{in}(v_{12})=1$ as $\exists (1, 3) \in L_{out}(v_3)$ and $(1, 3) \in L_{in}(v_{12})$, we conclude $v_3 \rightarrow v_{12}$.
\end{example}

Next, we define another operator $\gg$ as follows. We say $L_{out}(u) \gg L_{out}(v)$ if one of the following two cases is true:
\begin{itemize}
  \item Case (1): $\exists (r.x, r.y) \in L_{out}(v)$, $\nexists (w.x, w.y) \in L_{out}(u)$ such that $w.x = r.x$, and $\exists (s.x, s.y) \in L_{out}(u)$ such that $s.x > r.x$;
  \item Case (2): $\exists (r.x, r.y) \in L_{out}(v)$ and $(w.x, w.y) \in L_{out}(u)$ such that $w.x = r.x$ and $w.y > r.y$.
\end{itemize}


Intuitively, $L_{out}(u) \gg L_{out}(v)$ tests whether (1) there exists a vertex $r$ in a chain $C_1$ in $L_{out}(v)$, not exists any vertex in $L_{out}(u)$ in the same chain $C_1$, and exists a vertex $s$ in a chain $C_2$ in $L_{out}(u)$, such that the chain rank of $C_2$ is larger than that of $C_1$, which indicates that $v$ can reach at least one vertex in $C_1$, while $u$ cannot reach any vertex in $C_1$; or (2) there exists a vertex $r$ in a chain $C$ in $L_{out}(v)$, and a vertex $w$ in same chain $C$ in $L_{out}(u)$, such that $r$ is ordered before $w$ in $C$, which indicates that the first vertex in $C$ that $v$ can reach is $r$, the first vertex in $C$ that $u$ can reach is $w$, and $r \rightarrow w$.

Similarly, we say $L_{in}(v) \gg L_{in}(u)$ if one of the following two cases is true:


\begin{itemize}
  \item Case (1): $\exists (r.x, r.y) \in L_{in}(u)$, $\nexists (w.x, w.y) \in L_{in}(v)$ such that $w.x$$=$$r.x$, and $\exists (s.x, s.y) \in L_{in}(v)$ such that $s.x$$>$$r.x$;


  \item Case (2): $\exists (r.x, r.y) \in L_{in}(u)$ and $(w.x, w.y) \in L_{in}(v)$ such that $w.x = r.x$ and $w.y < r.y$.
\end{itemize}



The following lemma shows how the operator $\gg$ can be used in reachability query processing.

\begin{lemma} \label{le:gg}
If $L_{out}(u) \gg L_{out}(v)$ or $L_{in}(v) \gg L_{in}(u)$, then $u \nrightarrow v$.
\end{lemma}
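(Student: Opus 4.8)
The plan is to prove the contrapositive: assuming $u \rightarrow v$, I will show that neither $L_{out}(u) \gg L_{out}(v)$ nor $L_{in}(v) \gg L_{in}(u)$ can hold. The backbone of the argument is a simple monotonicity property of reachability relative to the chain cover. If $u \rightarrow v$, then every vertex reachable from $v$ is also reachable from $u$, so $RC(v) \subseteq RC(u)$; moreover, for each chain $C \in RC(v)$, the path $u \rightarrow v \rightarrow first_v(C)$ shows that $u$ already reaches $first_v(C)$, so the first vertex $first_u(C)$ that $u$ reaches in $C$ satisfies $first_u(C).y \le first_v(C).y$. Symmetrically, $RC^{-1}(u) \subseteq RC^{-1}(v)$, and for each $C \in RC^{-1}(u)$ we have $last_v(C).y \ge last_u(C).y$. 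These two facts are what I will play against the two cases in the definition of $\gg$.

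For the out-label part I first dispose of Case (2). There a code $(r.x, r.y) \in L_{out}(v)$ and a code $(w.x, w.y) \in L_{out}(u)$ lie in a common chain $C$ (i.e.\ $r.x = w.x$) with $w.y > r.y$. Since $L_{out}(v) \subseteq RFcode(v)$ and $L_{out}(u) \subseteq RFcode(u)$, we must have $r = first_v(C)$ and $w = first_u(C)$, so $w.y > r.y$ directly contradicts the monotonicity inequality $first_u(C).y \le first_v(C).y$. For Case (1) the $top_k$ truncation becomes essential. The code $(r.x, r.y) \in L_{out}(v)$ witnesses that the chain $C$ of rank $r.x$ lies in $RC(v) \subseteq RC(u)$, so $r.x$ occurs in $RFcode(u)$ (exactly once, since distinct chains carry distinct ranks). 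The hypothesis of Case (1) says $r.x$ is absent from $L_{out}(u) = top_{k}(RFcode(u))$; since $top_{k}$ retains the $k$ smallest ranks and $r.x$ is not among them, every rank present in $L_{out}(u)$ is strictly smaller than $r.x$. This contradicts the other requirement of Case (1), namely that some $(s.x, s.y) \in L_{out}(u)$ has $s.x > r.x$. Hence neither case holds, and $u \rightarrow v$ gives $L_{out}(u) \not\gg L_{out}(v)$.

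The in-label part is entirely symmetric: replacing $RC$, $first$, and ``$\le$'' by $RC^{-1}$, $last$, and ``$\ge$'', the same two arguments show that under $u \rightarrow v$ neither case of $L_{in}(v) \gg L_{in}(u)$ can hold, using $last_v(C).y \ge last_u(C).y$ for Case (2), and $RC^{-1}(u) \subseteq RC^{-1}(v)$ together with the $top_{k}$ truncation for Case (1). Combining the two parts, $u \rightarrow v$ implies both $L_{out}(u) \not\gg L_{out}(v)$ and $L_{in}(v) \not\gg L_{in}(u)$, which is precisely the contrapositive of Lemma~\ref{le:gg}.

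I expect the main obstacle to be the bookkeeping in Case (1), where I must argue carefully that a chain rank which is reachable but pushed out of the top-$k$ list is necessarily larger than all ranks retained in the list; this relies on each chain contributing exactly one code (one rank) to $RFcode(\cdot)$ and on $top_{k}$ selecting the smallest ranks. The Case (2) arguments are comparatively routine once the monotonicity inequalities $first_u(C).y \le first_v(C).y$ and $last_v(C).y \ge last_u(C).y$ are established.
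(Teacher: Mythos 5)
Your proof is correct and follows essentially the same route as the paper's: both arguments reduce to the observation that $u \rightarrow v$ and $v \rightarrow r$ give $u \rightarrow r$, then derive a contradiction in Case (2) from the definition of $first_u(C)$ (your monotonicity inequality $first_u(C).y \le first_v(C).y$) and in Case (1) from the fact that $top_k$ retains the $k$ smallest chain ranks, so a reachable rank $r.x$ cannot be excluded while a larger rank $s.x$ is kept. The only difference is cosmetic (contrapositive versus contradiction), and your explicit handling of the one-code-per-chain bookkeeping in Case (1) matches what the paper leaves implicit.
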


\begin{proof}
First, we prove if $L_{out}(u) \gg L_{out}(v)$, then $u \nrightarrow v$. We prove by contradiction, by assuming that $L_{out}(u) \gg L_{out}(v)$ and $u \rightarrow v$. Consider the two cases of $L_{out}(u) \gg L_{out}(v)$.

If Case (1) is true, then: since $(s.x, s.y) \in L_{out}(u)$, we have $(s.x, s.y) \in top_{k}(RFcode(u))$. Since $\nexists (w.x, w.y) \in L_{out}(u)$ s.t. $w.x = r.x$, we have: $\nexists (w.x, w.y) \in top_{k}(RFcode(u))$ s.t. $w.x = r.x$. Then, $(r.x, r.y) \in L_{out}(v)$ implies $v \rightarrow r$, and together with the assumption $u \rightarrow v$, it implies $u \rightarrow r$. However, by the definition of $top_{k}(RFcode(u))$, $u \rightarrow r$ and $(s.x, s.y) \in top_{k}(RFcode(u))$, where $s.x > r.x$, implies that there must exist $(w.x, w.y) \in top_{k}(RFcode(u))$ s.t. $w.x = r.x < s.x$. This is a contradiction and hence $L_{out}(u) \gg L_{out}(v)$ must imply $u \nrightarrow v$.

If Case (2) is true, then: since $(w.x, w.y) \in L_{out}(u)$, we have $w \in RF(u)$. Then, $(r.x, r.y) \in L_{out}(v)$ implies $v \rightarrow r$, and together with the assumption $u \rightarrow v$, it implies $u \rightarrow r$. However, by the definition of $RF(u)$, $u \rightarrow r$ and $w \in RF(u)$, where $w.x = r.x$ and $w.y > r.y$, implies that $r$ should be in $RF(u)$ instead of $w$. Thus, we have a contradiction and hence $L_{out}(u) \gg L_{out}(v)$ must imply $u \nrightarrow v$.

Similarly, we can show that if $L_{in}(v) \gg L_{in}(u)$, then $u \nrightarrow v$.
\end{proof}


We illustrate how the operator $\gg$ works as follows.


\begin{example} \label{eg:gg}
Consider $G$ in Figure ~\ref{fig:dag} and $k=2$, Algorithm~\ref{alg:labeling} computes $L_{out}(v_{2})=\{(1,2),(2,2)\}$, $L_{out}(v_3)=\{(1,3),(3,1)\}$ and $L_{out}(v_{5})=\{(2,1),(4,1)\}$. Since $\exists (3, 1) \in L_{out}(v_3)$,  $\nexists (w.x, w.y)$ $\in L_{out}(v_5)$ such that $w.x = 3$, and $\exists (2, 1) \in L_{out}(v_5)$ such that it satisfies Case (1) for $L_{out}(v_{3}) \gg L_{out}(v_{5})$, we have $v_{3} \nrightarrow v_{5}$. Since $\exists (2, 2) \in L_{out}(v_2)$ and $(2, 1) \in L_{out}(v_{5})$ such that it satisfies Case (2) for $L_{out}(v_{2}) \gg L_{out}(v_{5})$, we conclude that $v_2 \nrightarrow v_{5}$.
\end{example}



Algorithm~\ref{alg:query} first uses the chain code of the query vertices, $u$ and $v$, to check whether $u$ and $v$ are in the same chain. If $u$ and $v$ are in the same chain, then by the definition of chain and the fact that $G$ is a DAG, we have $u \rightarrow v$ if $u.y \le v.y$ and $u \nrightarrow v$ if $u.y > v.y$. Then, the algorithm applies the operators $\oplus$ and $\gg$ on the labels of $u$ and $v$ to further examine whether the query answer can be determined. If not, then the algorithm processes the query by testing if any of the descendants of $u$ can reach $v$, by visiting the descendants in a depth-first manner. If a descendant of $u$ can reach $v$, then it implies $u \rightarrow v$. Otherwise, the algorithm finally returns $u \nrightarrow v$. Note that we can prune some descendants of $u$ in the search, which will be discussed in Section~\ref{sec:improvement}.


\begin{theorem} \label{th:query}
Algorithm~\ref{alg:query} correctly answers a reachability query whether $u \rightarrow v$.
\end{theorem}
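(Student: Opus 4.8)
The plan is to prove that Algorithm~\ref{alg:query} returns \emph{true} exactly when $u \rightarrow v$, by structural induction following the recursion, i.e.\ on the length of the longest path leaving $u$ in the DAG $G$. Finiteness and acyclicity of $G$ are guaranteed by Lemma~\ref{le:dag}, so this quantity is well defined and, once visited vertices are marked, the recursion terminates. The backbone of the argument is the reachability recurrence on a DAG: $u \rightarrow v$ holds if and only if $u$ and $v$ lie on a common chain with $u$ not ordered after $v$, or $w \rightarrow v$ for some $w \in \Gamma_{out}(u,G)$. I would verify that each of the four exit points of the algorithm is consistent with this recurrence, treating the two early ``shortcut'' returns as soundness claims and the recursive fall-through as the completeness claim.

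For soundness I would dispatch the early returns one by one. The same-chain test (Lines~\ref{ln:samechain0}--\ref{ln:samechain}) is justified because, within a single chain $C=\langle v_1,\ldots,v_h\rangle$ of the cover, $v_i \rightarrow v_j$ holds whenever $i \le j$ by the definition of a chain, while $v_i \nrightarrow v_j$ for $i>j$, since otherwise $G$ would contain a cycle, contradicting Lemma~\ref{le:dag}; hence the answer is determined precisely by comparing $u.y$ and $v.y$. The return of \emph{false} in Line~\ref{ln:gg} is immediate from Lemma~\ref{le:gg}, and the return of \emph{true} in Line~\ref{ln:oplus} is immediate from Lemma~\ref{le:oplus}. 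Thus whenever the algorithm produces an answer before recursing, that answer is correct.

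The remaining and most delicate part is the recursive branch (Lines~\ref{ln:bfs0}--\ref{ln:bfs}) together with the final \emph{false}. Here I would invoke the inductive hypothesis: every out-neighbour $w$ of $u$ has a strictly shorter longest out-path, so the recursive call \textbf{ReachQ}$(G,\mathbb{L},(w,v))$ may be assumed correct, returning \emph{true} iff $w \rightarrow v$. Soundness of a recursive \emph{true} then follows since $(u,w)\in E$ and $w \rightarrow v$ give $u \rightarrow v$. Completeness is the crux: I must argue that whenever $u \rightarrow v$ yet none of the three early tests fired, some explored out-neighbour reports \emph{true}. Since reaching the recursion means $u$ and $v$ are in different chains and hence $u \ne v$, the relation $u \rightarrow v$ forces a path through some $w\in\Gamma_{out}(u,G)$ with $w \rightarrow v$; such a $w$ cannot have been skipped as already visited, because a completed prior visit would have concluded $w \nrightarrow v$, contradicting $w \rightarrow v$. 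By the inductive hypothesis the call on $w$ returns \emph{true}, so the algorithm returns \emph{true}.

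I expect the main obstacle to be precisely this interaction between pruning and completeness: I must ensure that the \emph{false} shortcut of Line~\ref{ln:gg} never severs a genuine $u$-to-$v$ path and that the visited-marking never discards a fruitful out-neighbour. The former is sealed by Lemma~\ref{le:gg}, which guarantees a \emph{false} shortcut fires only when $u \nrightarrow v$; the latter follows from the fact that reachability of the fixed target $v$ is a deterministic property of each vertex, so a revisit can be safely skipped. With these two points in place the induction closes and both directions of correctness follow.
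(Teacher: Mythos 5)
Your proposal is correct and follows essentially the same route as the paper's (one-line) proof: the early returns are justified by Lemma~\ref{le:oplus}, Lemma~\ref{le:gg}, and the fact that $G$ is a DAG (which settles the same-chain comparison of $u.y$ and $v.y$), while the recursive search is handled by the standard DFS-reachability argument on an acyclic graph. You simply make explicit the induction and the visited-marking argument that the paper leaves implicit.
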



\begin{proof}
The correctness follows from Lemmas~\ref{le:oplus} and~\ref{le:gg}, and also the fact that $G$ is a DAG.
\end{proof}



\subsection{Time-Based Queries}   \label{ssec:timequery}

We now discuss how to answer temporal reachability queries and minimum temporal path queries.


\noindent \textbf{Temporal reachability queries. } To answer a reachability query whether a source vertex $a$ can reach a target vertex $b$ in a temporal graph $\mathcal{G}$ within a time interval $[t_{\alpha}, t_{\omega}]$, we process the query in the transformed graph $G$ of $\mathcal{G}$ as follows. 

We first find $\langle a,t_{out} \rangle$ in $V_{out}(a)$, where $t_{out}= \min\{ t : \langle a,t \rangle \in V_{out}(a), t \geq t_{\alpha}\}$. Since the vertices in $V_{out}(a)$ are ordered by their time stamp, we find $\langle a,t_{out} \rangle$ by binary search. Similarly, we find $\langle b,t_{in} \rangle$ in $V_{in}(b)$, where $t_{in}= \max\{ t : \langle b,t \rangle \in V_{in}(b), t \leq t_{\omega}\}$.

Let $u=\langle a,t_{out} \rangle$ and $v=\langle b,t_{in} \rangle$. If $u$ or $v$ does not exist, then the answer to the query is false. Otherwise, Algorithm~\ref{alg:query} is called to answer whether $u$ can reach $v$ in $G$. If Algorithm~\ref{alg:query} returns true, then $a$ can reach $b$ in $\mathcal{G}$ within $[t_{\alpha}, t_{\omega}]$. Otherwise, $a$ cannot reach $b$ within $[t_{\alpha}, t_{\omega}]$. 

In addition, if Lines~\ref{ln:bfs0}-\ref{ln:bfs} of Algorithm~\ref{alg:query} need to be executed, we can employ the time interval $[t_{\alpha}, t_{\omega}]$ for search space pruning as follows. For any descendant $w=\langle c,t \rangle$ of $u$ visited during the search, if $t > t_{\omega}$, we can directly terminate the search from $w$.


The above procedure, however, may give an incorrect query answer in the case when $a=b$, i.e., $u.x=v.x$,  $u$ and $v$ are in the same chain. This is because the chains of $G$ obtained in Section~\ref{ssec:chain} may present false reachability information, i.e., $u$ is ordered before $v$ in a chain but $u$ cannot reach $v$ in $G$. However, this can be easily addressed as follows. In the case when $u \in V_{out}(a)$ and $v \in V_{in}(a)$, where $u.x=v.x$, we simply call Algorithm~\ref{alg:query} to answer whether $\exists w \in W$, where $W=\{w: w \in \Gamma_{out}(u',G), u'.x=u.x, u'.y \ge u.y, w.x \ne u.x\}$ (i.e., $u'$ is $u$ or any vertex ordered after $u$ in the same chain, and $w$ is an out-neighbor of $u'$ that is not in the same chain of $u'$), such that $w \rightarrow v$. We have $u \rightarrow v$ if and only if $w$ exists.

The following theorem proves the correctness of processing a temporal reachability query.


\begin{theorem} \label{th:query2}
The algorithm described above correctly answers a reachability query whether $a$ can reach $b$ in $\mathcal{G}$ within $[t_{\alpha}, t_{\omega}]$.
\end{theorem}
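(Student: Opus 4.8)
The plan is to reduce the temporal query to an ordinary reachability test $u \rightarrow v$ on the transformed graph $G$ and then appeal to Theorem~\ref{th:query}. The heart of the argument is a correspondence between temporal paths of $\mathcal{G}$ and paths of $G$, which I would state as: for any times $s$ and $f$, there is a temporal path $P$ from $a$ to $b$ with $start(P)=s$ and $end(P)=f$ \emph{iff} there is a path in $G$ from $\langle a,s\rangle \in V_{out}(a)$ to $\langle b,f\rangle \in V_{in}(b)$. For the forward direction I would map each temporal edge $e_i=(v_i,v_{i+1},\tau_i,\lambda_i)$ of $P$ to the edge $\langle v_i,\tau_i\rangle \to \langle v_{i+1},\tau_i+\lambda_i\rangle$ created in step~(2c), and then splice consecutive mapped edges using the constraint $\tau_i+\lambda_i \le \tau_{i+1}$ together with the internal-reachability fact already established inside the proof of Theorem~\ref{th:mergechain} (if $t_1 \le t_2$ then $\langle v,t_1\rangle\in V_{in}(v)$ reaches $\langle v,t_2\rangle\in V_{in}(v)\cup V_{out}(v)$). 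For the converse I would use that every edge of $G$ is time non-decreasing (as shown in the proof of Lemma~\ref{le:dag}): a $G$-path is therefore a time-monotone walk, and reading off its step-(2c) edges in order yields a temporal path whose inter-edge gaps automatically satisfy the temporal constraint, with $start$ and $end$ read from the first out-copy and last in-copy on the path.

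Given this correspondence, the choice of $u$ and $v$ is exactly what selects the right paths. Because $start(P)$ is a departure time of $a$ we have $\langle a,start(P)\rangle\in V_{out}(a)$, so $start(P)\ge t_\alpha$ forces $start(P)\ge t_{out}$ by minimality of $t_{out}$, and the $V_{out}(a)$ chain gives $u=\langle a,t_{out}\rangle \rightarrow \langle a,start(P)\rangle$; dually $end(P)\le t_\omega$ forces $end(P)\le t_{in}$ and $\langle b,end(P)\rangle \rightarrow \langle b,t_{in}\rangle=v$. Hence $u\rightarrow v$ in $G$ holds iff some temporal path from $a$ to $b$ has $start\ge t_\alpha$ and $end\le t_\omega$, which is temporal reachability within $[t_\alpha,t_\omega]$ by Definition~\ref{de:reachability}; Theorem~\ref{th:query} then guarantees Algorithm~\ref{alg:query} decides this correctly. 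The degenerate branch is immediate: if $u$ or $v$ does not exist then $a$ has no departure at time $\ge t_\alpha$ or $b$ no arrival at time $\le t_\omega$, so no qualifying path exists. I would also note that the optional interval pruning is sound: a descendant $\langle c,t\rangle$ with $t>t_\omega$ cannot reach $v=\langle b,t_{in}\rangle$ since $t_{in}\le t_\omega<t$ and $G$-edges are time non-decreasing.

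The case $a=b$ needs separate care, and this is where the false reachability introduced by merging $V_{in}(a)$ and $V_{out}(a)$ into one chain bites. Here $u\in V_{out}(a)$ and $v\in V_{in}(a)$ share a chain, so the plain chain-order test of Algorithm~\ref{alg:query} would consult the fabricated order rather than genuine reachability in $G$. I would argue that in the \emph{true} graph $G$ any path from an out-copy $u$ to an in-copy $v$ of the same original vertex must first leave the chain through a step-(2c) edge, since the only non-internal out-edges of $V_{out}(a)$ are the real edges and no internal edge runs from $V_{out}(a)$ to $V_{in}(a)$. Consequently $u\rightarrow v$ in $G$ iff there is some $u'$ with $u'.x=u.x$, $u'.y\ge u.y$ and an out-neighbour $w$ of $u'$ with $w.x\ne u.x$ such that $w\rightarrow v$, i.e. iff $\exists w\in W$ with $w\rightarrow v$ --- exactly the test performed. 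Each such query $w\rightarrow v$ is decided correctly because $w$ lies in an original vertex different from $a$ while $v$ lies in $a$'s chain, so by Theorem~\ref{th:mergechain} the false information inside that chain does not affect the reachability between them.

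The step I expect to be the main obstacle is making the temporal-path/$G$-path correspondence fully rigorous, in particular handling the step-(2b) bridging edges, which are created by a matching rule rather than for every in-copy; I would lean on the internal-reachability fact from the proof of Theorem~\ref{th:mergechain} to bypass the matching details. The second delicate point is keeping the distinction between reachability in the true $G$ and in the conceptual $G_{new}$ straight throughout the $a=b$ analysis, so that the correctness of the auxiliary query $w\rightarrow v$ genuinely follows from Theorem~\ref{th:mergechain} rather than circularly from the fabricated chain order.
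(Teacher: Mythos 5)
Your proof is correct and follows the same essential route as the paper: the only delicate point the paper's own proof addresses is the false chain-order information arising when $a=b$ (i.e., $u\in V_{out}(a)$ and $v\in V_{in}(a)$ share a chain), which you resolve identically via the set $W$ of out-of-chain out-neighbours and an appeal to Theorem~\ref{th:mergechain}. The only difference is that you additionally make explicit the correspondence between temporal paths in $\mathcal{G}$ and paths in $G$ and the soundness of choosing $u=\langle a,t_{out}\rangle$ and $v=\langle b,t_{in}\rangle$, which the paper inherits implicitly from the graph transformation of Wu et al.\ rather than re-proving; your version is therefore more self-contained but not a different argument.
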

\begin{proof}
Since Theorem~\ref{th:mergechain} proves that the labels are constructed correctly given the chain cover defined on $G_{new}$, we examine whether false information presented in $G_{new}$ may lead to a wrong query answer. As shown in Section~\ref{ssec:chain}, the false information is ``$u_i \rightarrow u_j$'' presented in a chain $C$ but $u_i$ cannot reach $u_j$ in $G$. Thus, if we compare $u_i.y$ with $u_j.y$ in Algorithm~\ref{alg:query}, we obtain a wrong result. We show that such a comparison does not happen.

The proof of Theorem~\ref{th:mergechain} shows that the false information happens only when $u_i \in V_{out}(v)$ and $u_j \in V_{in}(v)$. However, according to the construction of $G$ and the fact that Lines~\ref{ln:bfs0}-\ref{ln:bfs} of Algorithm~\ref{alg:query} traverse $G$ instead of $G_{new}$, the case that $u_i \in V_{out}(v)$ and $u_j \in V_{in}(v)$, where $u_i.x = u_j.x$, occurs only when $(u_i,u_j)$ is the input query of Algorithm~\ref{alg:query}. For such a query, we avoid comparing $u_i.y$ with $u_j.y$, as well as comparing $u_{i'}.y$ with $u_j.y$ for any $u_{i'} \in   V_{out}(v)$ that is ordered after $u_i$ in the same chain, by calling Algorithm~\ref{alg:query} to answer whether $w \rightarrow u_j$ for an out-neighbor $w$ of $u_i$ or $u_{i'}$, where $w$ is not in the same chain as $u_i$. According to the construction of $G$, $u_i$ can reach $u_j$ in $G$ if and only if there exists $w$ such that $w \rightarrow u_j$. Thus, the algorithm correctly answers the query.
\end{proof}





\noindent \textbf{Earliest-arrival time.} We compute the earliest-arrival time going from vertex $a$ to vertex $b$ within $[t_{\alpha}, t_{\omega}]$ as follows. We first find $\langle a,t_{out} \rangle$ in $V_{out}(a)$, where $t_{out}= \min\{ t : \langle a,t \rangle \in V_{out}(a), t \geq t_{\alpha}\}$. Then, we compute the set of vertices, $B=\{\langle b,t \rangle : \langle b,t \rangle \in V_{in}(b), t_{\alpha} \leq t \leq t_{\omega} \}$.

Let $u=\langle a,t_{out} \rangle$. We want to find $v = \langle b,t \rangle \in B$ such that $u \rightarrow v$, where $\nexists v'$$=$$\langle b,t' \rangle \in B$ such that $u \rightarrow v'$ and $t'$$<$$t$. If such a vertex $v$ can be found, then the earliest-arrival time going from $a$ to $b$ by any path in $\mathcal{G}$ within $[t_{\alpha}, t_{\omega}]$ is given by $t$. If $v$ is not found, then the corresponding earliest-arrival path does not exist in $\mathcal{G}$.

As vertices in $B$ are ordered according to their time stamp, we can employ a binary-search-like process to find $v$, instead of querying whether $u \rightarrow w$ for each $w \in B$. Let $B = \{w_1, \ldots, w_h\}$ and $w_i = \langle b,t_i \rangle$, where $t_i < t_{i+1}$ for $1 \le i < h$. We start with $w_h$. If $u \nrightarrow w_h$, then $u \nrightarrow w_i$ for $1 \le i \le h$; thus, we can conclude that the earliest-arrival path from $a$ to $b$ does not exist in $\mathcal{G}$ within $[t_{\alpha}, t_{\omega}]$. If $u \rightarrow w_h$, then we choose the middle vertex in $B$, i.e., $w_{h/2}$, and process the query whether $u \rightarrow w_{h/2}$. In this way, we stop until we find the first vertex $w_i \in B$ where $u \rightarrow w_i$, and return $t_i$ as the query answer.

We process each query $u \rightarrow w_i$ by Algorithm~\ref{alg:query}. The correctness of the query answer follows from the fact that an earliest-arrival path from $a$ to $b$ is simply a path starting from $a$ that \emph{reaches} $b$ at the earliest time.



\noindent \textbf{Minimum duration.} We compute the minimum duration taken to go from vertex $a$ to vertex $b$ within $[t_{\alpha}, t_{\omega}]$ as follows. We first compute $A=\{\langle a,t \rangle : \langle a,t \rangle \in V_{out}(a), t_{\alpha} \leq t \leq t_{\omega} \}$. Then, from each $u_i=\langle a,t_i \rangle \in A$, we obtain a starting time $t_i$, and find the earliest-arrival time going from $a$ to $b$ within $[t_i, t_{\omega}]$ by the same binary-search-like process discussed above for computing earliest-arrival time. Let $t'_i$ be the earliest-arrival time obtained starting at time $t_i$. Then, the minimum duration is given by $\min\{(t'_i - t_i): u_i=\langle a,t_i \rangle \in A\}$. The correctness of the query answer follows from the fact that a fastest path from $a$ (starting at time $t$) to $b$ is also an earliest-arrival path from $a$ (starting at time $t$) to $b$.

\section{Improvements on Labeling}  \label{sec:improvement}

We present two improvements on our labeling scheme.



\noindent \textbf{Label reduction. } \ With a close investigation of the property of the transformed graph, we can reduce the label size by half as follows.

Given a vertex $\langle a,t_{out} \rangle \in V_{out}(a)$, let $\langle a,t_{in} \rangle \in V_{in}(a)$ where $t_{in}= \max\{ t : \langle a,t \rangle \in V_{in}(a), t \leq t_{out} \}$. Let $u= \langle a,t_{out} \rangle$ and $v=\langle a,t_{in} \rangle$. Then, we only need to keep $L_{out}(u)$ for $u$, and keep a pointer to $L_{in}(v)$. When $L_{in}(u)$ is needed for query processing, we simply use $L_{in}(v)$ instead.

Similarly, given a vertex $\langle a,t_{in} \rangle \in V_{in}(a)$, let $\langle a,t_{out} \rangle \in V_{out}(a)$ where $t_{out}= \min\{ t : \langle a,t \rangle \in V_{out}(a), t \geq t_{in} \}$. Let $u= \langle a,t_{in} \rangle$ and $v=\langle a,t_{out} \rangle$. We only need to keep $L_{in}(u)$ for $u$, and keep a pointer to $L_{out}(v)$. When $L_{out}(u)$ is needed for query processing, we simply use $L_{out}(v)$ instead.

The following lemma shows the correctness of label reduction.


\begin{lemma} \label{le:reduction}
Label reduction does not affect the correctness of processing a temporal reachability query.
\end{lemma}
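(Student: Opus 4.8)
The plan is to exploit one clean observation: although a reduced label is physically stored at a partner vertex, it is the \emph{true} label of that partner, so Lemmas~\ref{le:oplus} and~\ref{le:gg} apply to it verbatim, and it only remains to transfer each sound conclusion about the partner back to the original vertex through a known reachability relation. Correctness of the returned answer then follows automatically, because the depth-first fallback in Lines~\ref{ln:bfs0}--\ref{ln:bfs} of Algorithm~\ref{alg:query} searches the real graph $G$; hence completeness is never at risk and I only need the soundness of the $\oplus$-shortcut (``return true'') and the $\gg$-pruning (``return false'') to survive substitution.

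First I would record two structural facts. (a) For an out-vertex $u=\langle a,t_{out}\rangle$ whose in-label is read as $L_{in}(v)$ with $v=\langle a,t_{in}\rangle$ and $t_{in}=\max\{t\le t_{out}:\langle a,t\rangle\in V_{in}(a)\}$, the construction of $G$ guarantees $v\rightarrow u$: $v$ reaches the earliest out-vertex of time $\ge t_{in}$, which is $\le t_{out}$ and hence reaches $u$ along the $V_{out}(a)$ chain. (b) For an in-vertex $u=\langle a,t_{in}\rangle$ whose out-label is read as $L_{out}(v)$ with $v=\langle a,t_{out}\rangle$ and $t_{out}=\min\{t\ge t_{in}:\langle a,t\rangle\in V_{out}(a)\}$, we have $u\rightarrow v$, and moreover $u$ and $v$ reach exactly the same vertices outside their common chain $C_a$: every out-vertex through which $u$ can leave $a$ has time $\ge t_{out}$, and all of these are reached by $v$, so $u\rightarrow x\Leftrightarrow v\rightarrow x$ for every $x\notin C_a$.

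Then I would check the three ways a substituted label enters Algorithm~\ref{alg:query}, for a query whose target is always an in-vertex. If the source is an in-vertex and its out-label is read as $L_{out}(v)$ via (b), then $L_{out}(v)\oplus L_{in}(\text{target})=1$ proves $v\rightarrow\text{target}$ by Lemma~\ref{le:oplus}, and $u\rightarrow v$ upgrades this to $u\rightarrow\text{target}$. If the source is an out-vertex and its in-label is read as $L_{in}(v)$ via (a), then a true $L_{in}(\text{target})\gg L_{in}(v)$ proves $v\nrightarrow\text{target}$ by Lemma~\ref{le:gg}, and since $v\rightarrow u$ the contrapositive gives $u\nrightarrow\text{target}$. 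The target's own out-label, when read as $L_{out}(v_{tgt})$, is safe whenever the source-side out-label is genuine, because $\text{target}\rightarrow v_{tgt}$ makes $\text{source}\nrightarrow v_{tgt}$ force $\text{source}\nrightarrow\text{target}$.

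The one genuinely compound case, and the step I expect to be the obstacle, is $L_{out}(u)\gg L_{out}(\text{target})$ when the source is itself an in-vertex, since then \emph{both} out-labels are substituted and the edge $u\rightarrow v_u$ points the wrong way for a direct contrapositive. Here Lemma~\ref{le:gg} yields only $v_u\nrightarrow v_{tgt}$, and I would close the gap with the off-chain equivalence in (b): the operators run only when $u.x\neq\text{target}.x$, so $v_{tgt}$ lies outside $C_a$, whence $u\rightarrow\text{target}\Rightarrow u\rightarrow v_{tgt}\Rightarrow v_u\rightarrow v_{tgt}$, and the contrapositive delivers $u\nrightarrow\text{target}$. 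Finally I would note that the same-chain test in Lines~\ref{ln:samechain0}--\ref{ln:samechain} reads each vertex's own (never reduced) chain code and is therefore untouched, and that the degenerate $a=b$ case is already quarantined by Theorem~\ref{th:query2}; so no substituted comparison is ever made between two copies of one original vertex, which is precisely what licenses the ``$v_{tgt}\notin C_a$'' used above.
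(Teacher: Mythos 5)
Your proof is correct and rests on the same core mechanism as the paper's: establish the partner-reachability relations ($v\rightarrow u$ for an out-vertex reading its in-partner's $L_{in}$, and $u\rightarrow v$ for an in-vertex reading its out-partner's $L_{out}$), use them to transfer the soundness of the $\oplus$ shortcut and the $\gg$ pruning back to the original query vertices, note that completeness is guaranteed by the DFS fallback on the real graph $G$, and quarantine the $a=b$ case via the transformation of Section~\ref{ssec:timequery}. Where you go beyond the paper is in coverage: the paper's proof only analyzes the top-level query pair $u\in V_{out}(a)$, $v\in V_{in}(b)$, for which both $\oplus$ operands are genuine and each $\gg$ test has at most one substituted operand, whereas you also treat the recursive calls $\textbf{ReachQ}(G,\mathbb{L},(w,v))$ in which the source $w$ may be an in-vertex, so that the test $L_{out}(w)\gg L_{out}(v)$ has \emph{both} operands substituted. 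Your resolution of that compound case — the observation that an in-vertex and its out-partner reach exactly the same vertices outside their common chain, combined with the fact that the operators only run when the two query vertices lie in different chains — is exactly the missing ingredient, and it is a strictly stronger structural fact than the one-directional relations the paper invokes. In short, the argument is the paper's argument carried through to the sub-queries it implicitly relies on; nothing in your write-up is wrong.
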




\begin{proof}
We first consider answering a query whether $a$ can reach $b$ in a temporal graph $\mathcal{G}$ within a time interval $[t_{\alpha}, t_{\omega}]$, where $a \ne b$. We transform the query to a query in $G$ as discussed in Section~\ref{ssec:timequery}, and let $u=\langle a,t_{out} \rangle \in V_{out}(a)$ and $v=\langle b,t_{in} \rangle \in V_{in}(b)$ be the two corresponding query vertices in $G$. Let $u'=\langle a,t_{u'} \rangle \in V_{in}(a)$ where $t_{u'}= \max\{ t : \langle a,t \rangle \in V_{in}(a), t \leq t_{out} \}$, and $v'=\langle b,t_{v'} \rangle \in V_{out}(b)$ where $t_{v'}= \min\{ t : \langle b,t \rangle \in V_{out}(b), t \geq t_{in} \}$. We show that using $L_{in}(u')$ instead of $L_{in}(u)$ and $L_{out}(v')$ instead of $L_{out}(v)$ will not affect the correctness.

Suppose $u \rightarrow v$. According to the construction of $G$, $u' \rightarrow u$ and $v \rightarrow v'$. Then, $u' \rightarrow u \rightarrow v \rightarrow v'$. In Algorithm~\ref{alg:query}, to answer whether $u \rightarrow v$, $L_{in}(u)$ is only used to check whether $L_{in}(v) \gg L_{in}(u)$. Note that $L_{in}(u)$ is not an operand of $\oplus$. Since $u' \rightarrow v$, $L_{in}(v) \gg L_{in}(u')$ is not true and hence does not report $u \nrightarrow v$. Similarly, since $u \rightarrow v'$, $L_{out}(u) \gg L_{out}(v')$ is not true.

Now suppose $u \nrightarrow v$. According to the construction of $G$, $u' \nrightarrow v$ and $u \nrightarrow v'$. Thus, using $L_{in}(u')$ instead of $L_{in}(u)$ and $L_{out}(v')$ instead of $L_{out}(v)$ will not give $u \rightarrow v$.

Finally, for the case $a = b$, we have $u.x=v.x$. This case is resolved by transforming the query $(u,v)$ into other queries in the form of $(w,v)$ where $w.x \ne v.x$, as discussed in Section~\ref{ssec:timequery}.
\end{proof}








\noindent \textbf{Topological-sort-based labels.} We can further prune unreachable vertices to reduce the querying cost by some \emph{light-weight} labels.


%

We first introduce the \emph{topological level number}~\cite{ChengHWF13sigmod,SeufertABW13icde,WeiYLJ14pvldb,YildirimCZ12vldb} for a vertex $v$, denoted by \ $\ell(v)$:
\begin{itemize}
  \item If \ $\Gamma_{in}(v,G) = \emptyset$$:$ \ $\ell(v)=1$;
  \item Else$:$ \ $\ell(v)=\max\{(\ell(u)+1): u \in \Gamma_{in}(v,G)\}$.
\end{itemize}

We use $\ell(v)$ in processing a reachability query as follows. If $\ell(u) \ge \ell(v)$ and $u \ne v$, then $u \nrightarrow v$. This is true because if $u \rightarrow v$, then $v$ is a descendant of $u$ and hence $\ell(u) < \ell(v)$. We can compute $\ell(v)$ for each $v \in V$ in linear time using a single topological sort of $G$.


A topological sort also gives an ordering of the vertices in $V$. Let $\sigma(v)$ be the position of a vertex $v$ in a topological ordering of $V$, where a vertex is ordered before its out-neighbors. We can use $\sigma(v)$ in processing a reachability query as follows. If $\sigma(u) > \sigma(v)$, then $u \nrightarrow v$. 
Note that topological ordering of $V$ may not be unique, and this can be employed to increase the pruning power. We compute topological sort by DFS, and generate two topological orderings of $V$ by visiting the out-neighbors of a vertex $v$ according to their original order in $\Gamma_{out}(v,G)$ as well as their reverse order in $\Gamma_{out}(v,G)$. Let $\sigma_1(v)$ and $\sigma_2(v)$ be the value of $\sigma(v)$ obtained from the two topological orderings of $V$. If either $\sigma_1(u) > \sigma_1(v)$ or $\sigma_2(u) > \sigma_2(v)$, then $u \nrightarrow v$.


%
%


\section{Performance Evaluation}   \label{sec:result}


We now report the performance of TopChain. We ran all the experiments on a machine with an Intel 2.0GHz CPU and 128GB RAM, running Linux.

\vspace{2mm}

\noindent \textbf{Datasets.} \ We use 15 real temporal graphs, 6 of them, {\tt austin}, {\tt berlin}, {\tt houston}, {\tt madrid}, {\tt roma} and {\tt toronto}, are from Google Transit Data Feed project (code.google.com/p/googletransitdatafeed/wiki/PublicFeeds), where each dataset represents the public transportation network of a city. The other 9 of them are from the Koblenz Large Network Collection (konect.uni-koblenz.de/), and we selected one large temporal graph from each of the following categories: {\tt amazon-ratings} ({\tt amazon}) from the Amazon online shopping website; \ {\tt arxiv-HepPh} ({\tt arxiv}) from the arxiv networks; \ {\tt dblp-coauthor} ({\tt dblp}) from the DBLP coauthor networks; \ {\tt delicious-ut} ({\tt delicious}) from the network of ``delicious''; \ {\tt enron} from the email networks; \ {\tt flickr-growth} ({\tt flickr}) from the social network of Flickr; \ {\tt wikiconflict}   ({\tt wikiconf}) indicating the conflicts between users of Wikipedia; \ {\tt wikipedia-growth}  ({\tt wikipedia})  from the English Wikipedia hyperlink network; \ {\tt youtube} from the social media networks of YouTube.


Table \ref{tab:realdata} gives some statistics of the datasets. We show the number of vertices and edges in each temporal graph $\mathcal{G}=(\mathcal{V},\mathcal{E})$ and the transformed graph $G=(V,E)$ of $\mathcal{G}$. The value of $\pi$ varies significantly for different graphs, indicating the different levels of temporal activity between two vertices in each $\mathcal{G}$. We also show the number of atomic time intervals in each $\mathcal{G}$, denoted by $|T_\mathcal{G}|$. If we break $\mathcal{G}$ into snapshots by atomic time intervals, the {\tt wikiconf} graph consists of as many as 273,909 snapshots.

\begin{table}[!tbp]
\caption{Datasets} \label{tab:realdata}
\begin{center}
\small
\resizebox{\linewidth}{!}{
\begin{tabular}{|l||r|r|r|r|r|r|}
\hline
Dataset & $|\mathcal{V}|$  & $|\mathcal{E}|$ & $\pi$ & $|T_\mathcal{G}|$  & $|V|$ & $|E|$\\
\hline \hline
{\tt austin} & 2,676 & 320,652	&659	&100,928	&629,664	&1,253,961 \\
\hline
{\tt berlin} & 12,845& 2,093,977 &2,221	&109,500	&3,175,993&6,753,520 \\
\hline
{\tt houston} & 9,848& 1,123,580	&783	&98,820	&2,205,384	&4,396,434 \\
\hline
{\tt madrid} & 4,636 &1,917,090&2,406&110,347&3,793,545	&7,590,572 \\
\hline
{\tt roma} & 8,779& 2,290,762	&2,170	&109,392&	4,431,239	&8,881,221 \\
\hline
{\tt toronto} &10,790&3,310,871&1,664&	109,660	&6,415,493&12,875,896 \\
\hline
{\tt amazon} & 2,146,057 & 5,776,660 & 28 & 3,329 & 9,883,393 & 13,166,635 \\
\hline
{\tt arxiv} & 28,093  & 9,193,606 & 262 & 2,337 & 433,412 & 9,759,445 \\
\hline
{\tt dblp} & 1,103,412  & 11,957,392 & 38 & 70 & 5,553,200 & 16,976,956 \\
\hline
{\tt delicious} & 4,535,197 & 219,581,041 & 1,070 & 1,583 & 73,792,065 & 293,632,816 \\
\hline
{\tt enron} & 87,273  & 1,134,990 & 1,074 & 213,218 & 1,366,786 & 2,504,928 \\
\hline
{\tt flickr} & 2,302,925  & 33,140,017 & 1 & 134 & 12,600,099 & 44,358,410 \\
\hline
{\tt wikiconf} & 118,100  & 2,917,777 & 562 & 273,909 & 3,191,271 & 6,009,300 \\
\hline
{\tt wikipedia} & 1,870,709  & 39,953,145 & 1 & 2,198 & 34,814,941 & 77,196,220 \\
\hline
{\tt youtube} & 3,223,589 & 12,223,774 & 2 & 203 & 11,497,869 & 21,139,520 \\
\hline
\end{tabular}
}
\end{center}
\end{table}


\subsection{Performance on Reachability Queries}  \label{result:reachability}

Existing reachability indexes can be categorized into three groups: (1)\emph{Transitive Closure}, (2)\emph{2-Hop Labels}, and (3)\emph{Label+Search}. We compare with the state-of-the-art indexes in each category: \textbf{PWAH8}~\cite{SchaikM11sigmod} in (1); \textbf{TOL}~\cite{ZhuLWX14sigmod} in (2); and \textbf{GRAIL++}~\cite{YildirimCZ12vldb}, \textbf{Ferrari}~\cite{SeufertABW13icde} and \textbf{IP+}~\cite{WeiYLJ14pvldb} in (3). We obtained the source codes from the authors. All the source codes are in C++, and we compiled them and TopChain using the same g++ compiler and optimization option.


We report the index size, indexing time, and querying time in Table~\ref{tab:indexSize}, Table~\ref{tab:indexTime}, and Table~\ref{tab:queryTime}, respectively (note that TTL is to be discussed in Section~\ref{result:path}; TC1 and TC2 are variants of TopChain to be discussed in Section~\ref{result:topchain}). The best results are highlighted in \textbf{bold}. The sign ``-'' in the tables indicates that PWAH8 or TOL or TTL cannot be constructed within time $\max(x,y)$, where $x$ is 100 times of the indexing time of TopChain and $y$ is 10,000 seconds. For example, PWAH8, TOL and TTL cannot be constructed in $x=38,448$ seconds for {\tt delicious}.




We set $k$ to 5 for all the \emph{Label+Search} indexes, i.e., TopChain, IP+, Ferrari and GRAIL++. The effect of $k$ will be studied in Section~\ref{result:topchain}, in general, query performance improves if we use a larger $k$, but a larger $k$ also leads to a larger index and longer indexing time. Since $k$ sets the number of labels for each vertex, with the same $k$ value, the four \emph{Label+Search} indexes have comparable sizes, as shown in Table~\ref{tab:indexSize}. In comparison, the index sizes of PWAH8, TOL and TTL are much larger. This is because that the index size of TopChain, IP+, Ferrari and GRAIL++ are linear to the graph size, while PWAH8, TOL and TTL may take quadratic space of the graph size.


For indexing efficiency, Table~\ref{tab:indexTime} shows that TopChain is the fastest in 11 out of 15 datasets, while for the other 4 datasets, the indexing time of TopChain is close to the best one. Compared with PWAH8 and TOL, TopChain is clearly much more scalable. PWAH8 and TOL have much worse performance than TopChain in terms of both index size and indexing time.


\begin{table}[!tbp]
\caption{Index size (in MB)} \label{tab:indexSize}
\begin{center}
\small
\resizebox{\linewidth}{!}{
\begin{tabular}{|l||r|r|r|r|r|r|r|}
\hline
Dataset & TopChain & IP+ & Ferrari & GRAIL++ & PWAH8 & TOL & TTL \\
\hline \hline
{\tt austin} & 31& \textbf{29}& 34& 38 & 296& 497& 110\\
\hline
{\tt berlin} & 157& \textbf{145}& 173& 354 & 3793& 1443& 439\\
\hline
{\tt houston}& 109& \textbf{101}& 120& 135 & 1993& 2129& 503\\
\hline
{\tt madrid} & 188& \textbf{174}& 206& 232& 8997& 6907& 870\\
\hline
{\tt roma} & 219& \textbf{203}& 241& 270 & 11097& 6075& 1062\\
\hline
{\tt toronto} & 318& \textbf{294}& 349& 392 & 11646& 5347& 905\\
\hline
{\tt amazon} & \textbf{373} & 395 & 482 & 603 & 31886 & 2532 & - \\
\hline
{\tt arxiv} & 21 & \textbf{19} & 24 & 26 & 53 & 6921 & 306\\
\hline
{\tt dblp} & 235 & \textbf{216} & 256 & 339 & 47390 & - & -\\
\hline
{\tt delicious} & 3410 & \textbf{3293} & 4010 & 4504 & - & - & - \\
\hline
{\tt enron} & \textbf{60} & 61 & 72 & 83 & 321 & 312 & 94\\
\hline
{\tt flickr} & 524 & \textbf{491} & 558 & 769 & - & - & -\\
\hline
{\tt wikiconf} & \textbf{124} & 142 & 176 & 195 & 1107 & 1105 & 214\\
\hline
{\tt wikipedia} & 1631 & \textbf{1542} & 1948 & 2125 & - & - & -\\
\hline
{\tt youtube} & 464 & \textbf{441} & 496 & 702 & - & - & - \\
\hline
\end{tabular}
}

\end{center}
\end{table}

\begin{table}[!tbp]
\caption{Indexing time (in seconds)} \label{tab:indexTime}
\begin{center}
\small
\resizebox{\linewidth}{!}{
\begin{tabular}{|l||r|r|r|r|r|r|r|}
\hline
Dataset & TopChain & IP+ & Ferrari & GRAIL++ & PWAH8 & TOL & TTL\\
\hline \hline
{\tt austin}& \textbf{0.98}& 1.14& 2.88& 2.71& 38.78& 79.39& 50.57\\
\hline
{\tt berlin}& \textbf{5.34}& 6.06& 13.59& 13.99& 493.97& 350.39& 566.33\\
\hline
{\tt houston}& \textbf{3.91}& 4.05& 9.58& 10.72& 247.98& 368.61& 306.20\\
\hline
{\tt madrid}& \textbf{6.14}& 7.10& 17.56& 18.07& 1158.39& 3078.81& 1702.01\\
\hline
{\tt roma}&\textbf{7.40}& 8.51& 19.17& 21.06& 1376.84& 1574.39& 2062.25\\
\hline
{\tt toronto}& \textbf{11.63}& 13.11& 29.21& 30.72& 1405.75& 1173.33& 1061.96\\
\hline
{\tt amazon} & \textbf{28.03} & 26.54 & 43.35 & 72.02 & 2495.06 & 817.75 & -\\
\hline
{\tt arxiv} & \textbf{2.37} & 4.78 & 4.80 & 8.08 & 27.73 & 6730.17 & 2761.00\\
\hline
{\tt dblp} & 17.82 & \textbf{17.31} & 31.40 & 45.91 & 7260.61 & - & -\\
\hline
{\tt delicious} & \textbf{384.48} & 489.63 & 848.24 & 778.89 & - & - & -\\
\hline
{\tt enron} & 2.37 & \textbf{2.26} & 4.57 & 6.07 & 33.44 & 64.79 & 603.73\\
\hline
{\tt flickr} & 55.69 & \textbf{53.82} & 83.57 & 139.10 & - & - & -\\
\hline
{\tt wikiconf} & \textbf{5.50} & 6.11 & 11.95 & 15.65 & 111.96 & 246.12 & 2530.21\\
\hline
{\tt wikipedia} & 151.29 &\textbf{ 142.57} & 259.06 & 298.07 & - & - & -\\
\hline
{\tt youtube} & \textbf{30.82} & 33.43 & 54.04 & 78.32 & - & - & - \\
\hline

\end{tabular}
}

\end{center}
\end{table}


\begin{table}[!tbp]
\caption{Total querying time (in milliseconds)} \label{tab:queryTime}
\begin{center}
\small
\resizebox{1.02\linewidth}{!}{
\begin{tabular}{|l||r|r|r|r|r|r|r|r|}
\hline
Dataset & TopChain & TC1 & TC2 & IP+ & Ferrari & GRAIL++ & PWAH8 & TOL \\
\hline \hline
{\tt austin}& 1.07 & 2.23& 3.84& 465.04& 282.21& 437.00& \textbf{1.04}& 4.58\\
\hline
{\tt berlin}& \textbf{1.23}& 4.55& 5.31& 10867.10& 5520.82& 4063.79& 1.64& 2.74\\
\hline
{\tt houston}& \textbf{0.54}& 0.56& 0.91& 3325.21& 2837.12& 1619.67& 1.20& 3.53\\
\hline
{\tt madrid}& \textbf{0.52}& 0.53& 1.46& 3198.42& 2974.85& 1971.84& 1.15& 3.17\\
\hline
{\tt roma}& \textbf{0.69}& 0.85& 1.21& 7806.06& 4326.89& 4066.42& 1.52& 3.32\\
\hline
{\tt toronto}& 4.81& 7.87& 11.55& 13899.10& 7397.83& 4631.84& \textbf{1.53}& 3.76\\
\hline
{\tt amazon} & 7.38 & 34.36 & 29.24 & 28.29 & 65.13 & 44.29 & 39.08 & \textbf{5.22} \\
\hline
{\tt arxiv} & \textbf{3.33} & 14.26 & 46.83 & 425.46 & 65.65 & 1154.70 & 23.77 & 4.39 \\
\hline
{\tt dblp} & \textbf{88.12} & 981.26 & 1777.74 & 4353.38 & 3335.07 & 2644.30 & 161.03 & - \\
\hline
{\tt delicious} & \textbf{27.05} & 1246.72 & 2648.84 & 60544.40 & 39117.00 & 4002.09 & - & - \\
\hline
{\tt enron} & \textbf{2.36} & 10.84 & 47.88 & 147.39 & 14.79 & 113.46 & 24.13 & 2.81 \\
\hline
{\tt flickr} & \textbf{21.46} & 91.52 & 254.40 & 4073.06 & 2160.87 & 1495.67 & - & - \\
\hline
{\tt wikiconf} & 7.23 & 24.02 & 176.65 & 701.14 & 51.65 & 406.60 & 38.26 & \textbf{3.16} \\
\hline
{\tt wikipedia} & \textbf{288.50} & 4074.23 & 15235.40 & 44810.10 & 8891.74 & 15758.72 & - & - \\
\hline
{\tt youtube} & \textbf{23.29} & 528.81 & 165.51 & 974.43 & 1099.03 & 324.17 & - & - \\
\hline
\end{tabular}

}
\end{center}
\end{table}

For query processing, we randomly generated 1000 queries, and set $[t_{\alpha}, t_{\omega}]$ to be $[0, \infty]$ for all queries so that query processing accesses the whole transformed graph. For all the indexes tested, we applied the same procedure of processing temporal reachability queries described in Section~\ref{ssec:timequery}.


Table~\ref{tab:queryTime} reports the total querying time by using each index. TopChain is the fastest in 11 out of 15 datasets. Among the four \emph{Label+Search} indexes, TopChain is the fastest in all cases and is from a few times to over two orders of magnitude faster than IP+, Ferrari and GRAIL++. It is particularly important for handling the larger datasets such as {\tt delicious} and {\tt wikipedia}, while other methods have long querying time, TopChain remains to be very efficient. This demonstrates the effectiveness and better scalability of TopChain's labeling scheme for querying reachability in temporal graphs.



\begin{table}[!tbp]
\caption{Total querying time of TopChain, TTL and 1-pass (in seconds)} \label{tab:path}
\begin{center}
\small
\resizebox{1.0\linewidth}{!}{
\begin{tabular}{|l||r|r|r|r|r|r|}
\hline
 & \multicolumn{3}{|c|}{Earliest-arrival} & \multicolumn{3}{|c|}{Fastest}  \\
 \cline{2-7}
& TopChain & TTL & 1-pass & TopChain & TTL &1-pass \\
\hline \hline
{\tt austin}& \textbf{0.006}& 0.016& 0.888& \textbf{0.023}& 0.035& 6.492\\
\hline
{\tt berlin}& \textbf{0.009}& 0.021& 5.192& \textbf{0.012}& 0.032& 9.057\\
\hline
{\tt houston}& \textbf{0.015}& 0.025& 2.542& \textbf{0.015}& 0.046& 12.782\\
\hline
{\tt madrid}& \textbf{0.003}& 0.059& 4.551& \textbf{0.019}& 0.136& 17.029\\
\hline
{\tt roma}& \textbf{0.047}& 0.060& 5.679& \textbf{0.036}& 0.117& 18.005\\
\hline
{\tt toronto}& 0.058& \textbf{0.036}& 8.472& 0.406& \textbf{0.061}& 27.576\\
\hline
{\tt amazon} & \textbf{0.011} & - & 88.440 & \textbf{0.048} & -& 206.499 \\
\hline
{\tt arxiv} & \textbf{0.006} & 0.086 &16.502 & \textbf{0.027} & 0.086 &276.874 \\
\hline
{\tt dblp} & \textbf{1.095} & - & 39.427 & \textbf{1.673} & - & 310.934 \\
\hline
{\tt delicious} & \textbf{0.010} & - & 611.542 & \textbf{0.153} & - & 5471.470 \\
\hline
{\tt enron} & \textbf{0.001} & 0.004 & 3.325 & \textbf{0.002} &\textbf{0.002}& 14.777 \\
\hline
{\tt flickr} & \textbf{0.109} & - & 125.018 & \textbf{0.469}& - & 1007.670 \\
\hline
{\tt wikiconf} & 0.018 & \textbf{0.012} &8.201 & 0.066 &\textbf{0.009}& 43.992 \\
\hline
{\tt wikipedia} & \textbf{0.638} & - & 222.182 & \textbf{1.507} & -& 3766.520 \\
\hline
{\tt youtube} & \textbf{0.076} & - & 60.800 & \textbf{0.127}& -& 233.982 \\
\hline

\end{tabular}
}
\end{center}
\end{table}

\subsection{Performance on Time-based Path Queries}  \label{result:path}

We compare TopChain with \textbf{1-pass}\cite{WuCHKLX14pvldb}, and the state-of-the-art indexing method, \textbf{TTL}\cite{WangLYXZ15sigmod}, for computing earliest-arrival time and the duration of a fastest path from a vertex $u$ to another vertex $v$ within time interval $[t_{\alpha}, t_{\omega}]$.

%

Table~\ref{tab:path} reports the total querying time of 1000 queries. TopChain is orders of magnitude faster than 1-pass. TTL is only able to handle 9 small datasets. The indexing time of TTL is orders of magnitude longer than that of TopChain, as reported in Table~\ref{tab:indexTime}; while TTL also has a larger index size than TopChain, as shown in Table~\ref{tab:indexSize}. Given such significantly higher indexing cost, TTL is still only faster than TopChain in just 2 of the 9 datasets that TTL can handle for querying earliest-arrival time, and in just 2 out of the 9 datasets for querying the duration of a fastest path. The result thus demonstrates the efficiency of our method for answering queries in a temporal graph in real time, while it also shows that TopChain is more scalable than existing methods for processing large temporal graphs.

\begin{figure}[!tbp]
\begin{minipage}[b]{0.49\linewidth}
\centering
\includegraphics[width=\textwidth]{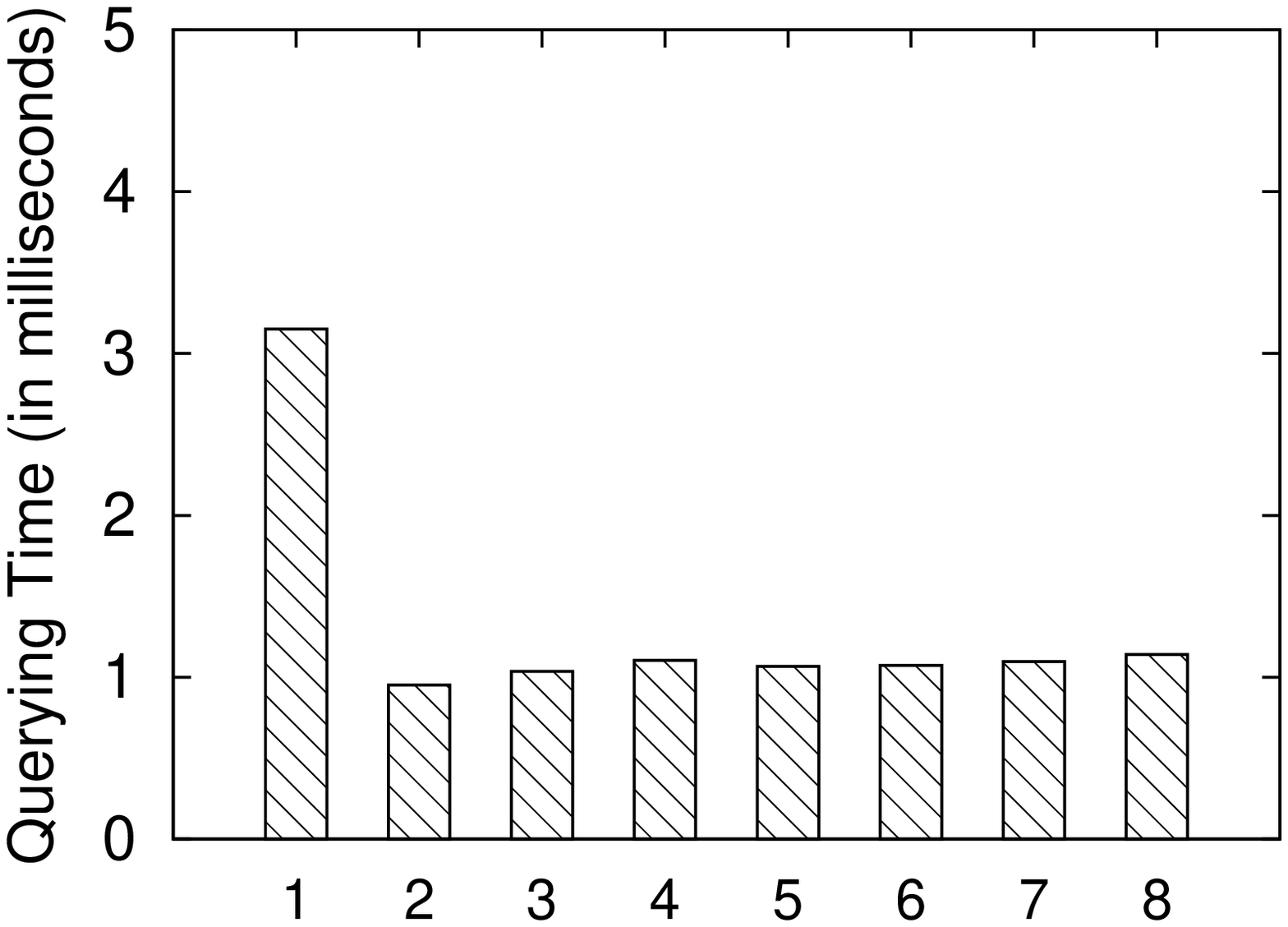}
\caption{Querying time of {\tt austin}}
\label{fig:qtime_k1}
\end{minipage}
\begin{minipage}[b]{0.49\linewidth}
\centering
\includegraphics[width=\textwidth]{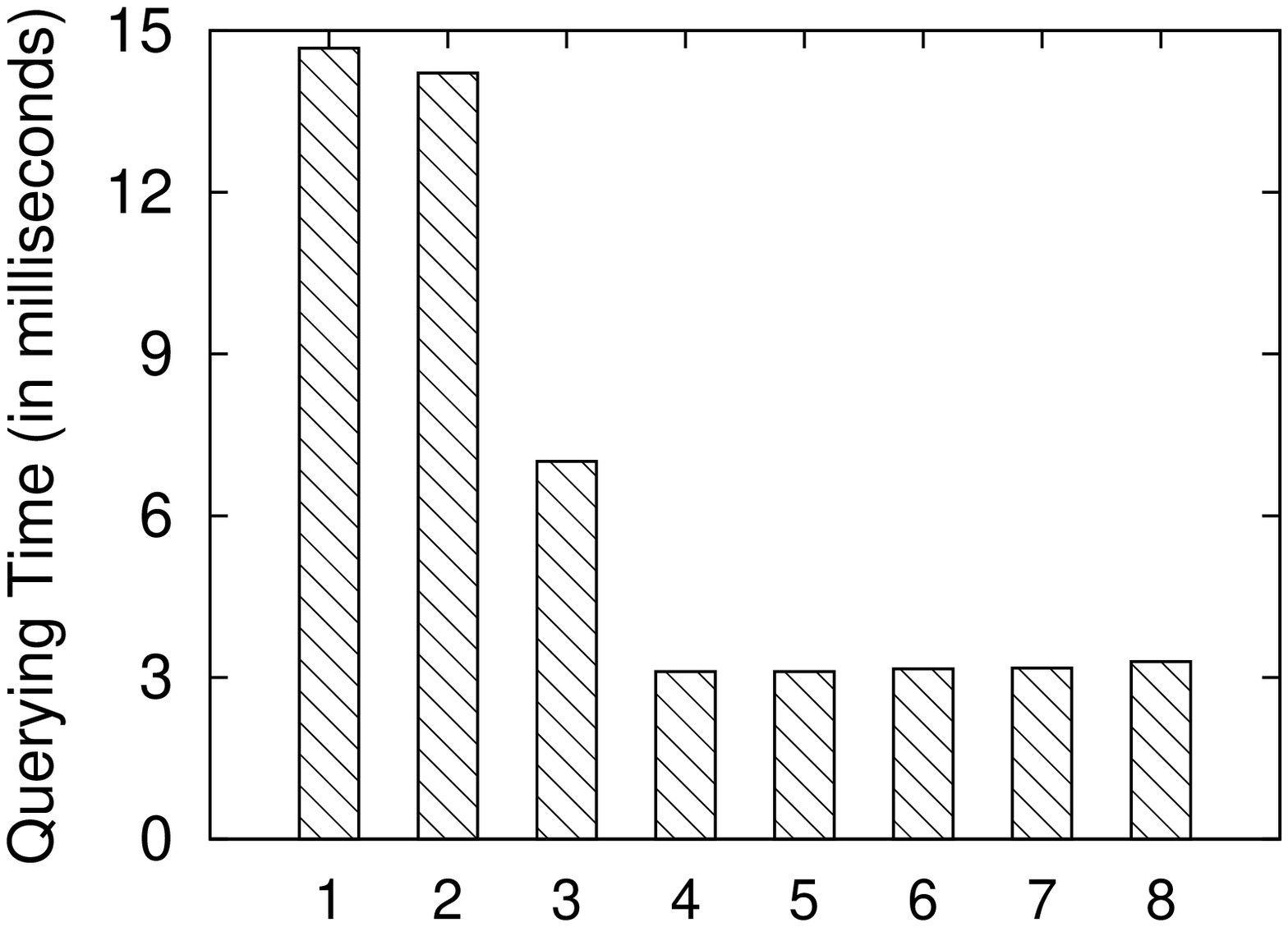}
\caption{Querying time of {\tt arxiv}}
\label{fig:qtime_k2}
\end{minipage}
\end{figure}

\begin{table}[!tbp]
\caption{Total querying time for varying intervals (in milliseconds)} \label{tab:interval}
\begin{center}
\small
\resizebox{0.75\linewidth}{!}{
\begin{tabular}{|l||r|r|r|r|}
\hline
Dataset & \textit{$I_1$} & \textit{$I_2$} & \textit{$I_3$} & \textit{$I_4$} \\
\hline \hline
{\tt austin}& 1.067& 0.310& 0.020& 0.015\\
\hline
{\tt berlin}& 1.230& 0.551& 0.430& 0.044\\
\hline
{\tt houston}& 0.544& 0.051& 0.044& 0.026\\
\hline
{\tt madrid}& 0.516& 0.333& 0.223& 0.060\\
\hline
{\tt roma}& 0.693& 0.367& 0.052& 0.032\\
\hline
{\tt toronto}& 4.812& 1.091& 0.158& 0.076\\
\hline
{\tt amazon} & 7.377 & 0.084 & 0.039 & 0.037 \\
\hline
{\tt arxiv} & 3.332 & 0.024 & 0.011 & 0.009 \\
\hline
{\tt dblp} & 645.109 & 0.085 & 0.037 & 0.035 \\
\hline
{\tt delicious} & 27.048 & 0.339 & 0.073 & 0.067 \\
\hline
{\tt enron} & 2.356 & 0.539 & 0.130 & 0.026 \\
\hline
{\tt flickr} & 21.459 & 19.980 & 18.243 & 0.245 \\
\hline
{\tt wikiconf} & 7.234 & 4.443 & 0.627 & 0.148 \\
\hline
{\tt wikipedia} & 288.502 & 0.213 & 0.218 & 0.173 \\
\hline
{\tt youtube} & 293.286 & 9.769 & 0.204 & 0.061 \\
\hline
\end{tabular}
}
\end{center}
\end{table}

\begin{figure*}[!t]
    \centering
    {\includegraphics[width=1.9\columnwidth]{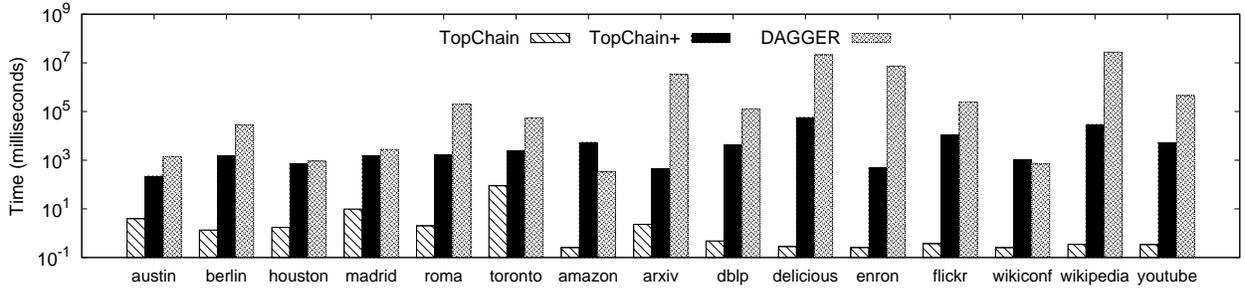}}
    \caption{Average update time due to edge insertion} \label{fig:updateTime}
\end{figure*}

\subsection{Study on TopChain Label}  \label{result:topchain}


Next, we study the strategy of chain cover and chain ranking used in our labeling scheme. As discussed in Section~\ref{ssec:chain}, TopChain merges $V_{in}(v)$ and $V_{out}(v)$ into a chain for each $v$ and rank the chains by degree. Here, we also tested two variants of TopChain: (1)\textbf{TC1}: we compute the chains by the greedy algorithm of~\cite{Simon88tcs} and rank them by degree; and (2)\textbf{TC2}: we merge $V_{in}(v)$ and $V_{out}(v)$ into a chain for each $v$ and rank the chains randomly. The index sizes of TC1 and TC2 are almost the same as TopChain. TopChain and TC2 have similar indexing time, but TC1 takes longer time due to chain computation. For query processing, as reported in Table~\ref{tab:queryTime}, TopChain is significantly faster than both TC1 and TC2. This result demonstrates the effectiveness of our labeling scheme in using the properties of temporal graphs.


We also study the effect of $k$. We report query performance using the two graphs, {\tt austin} and {\tt arxiv}, where  the average degree of the transformed graph of {\tt austin} is around 2 (representing graphs with low degree), while it is larger than 20 for {\tt arxiv} (representing graphs with higher degree). Figures~\ref{fig:qtime_k1} and~\ref{fig:qtime_k2} show that a larger $k$ can improve query performance, but it does not help when $k$ is too large. For dataset {\tt austin}, when $k$ is larger than 2, the querying time does not decrease as $k$ increases. Similarly, for dataset {\tt arxiv}, the query performance does not improve when $k$ is larger than 4. This result shows that a small $k$ value is sufficient for good query performance.

\subsection{Effect of Varying Time Intervals}  \label{result:interval}


The input time interval $[t_{\alpha}, t_{\omega}]$ can affect query performance since a smaller interval gives a smaller search space. We tested four different time intervals, $I_1$ to $I_4$. We set $I_1=[0,|T_\mathcal{G}|]$, where $|T_\mathcal{G}|$ is shown in Table~\ref{tab:realdata}. For each $I_i$, for $1 \leq i \leq 3$, we divide $I_i$ into two equal sub-intervals so that $I_{i+1}$ is the first sub-interval of $I_i$.

We used the same 1000 temporal reachability queries tested in Section~\ref{result:reachability}, but with different input time interval $I_i$. Table~\ref{tab:interval} shows that the total querying time of TopChain decreases significantly from time interval $I_1$ to $I_2$ for most datasets, and then the decrease becomes slowly when the time intervals become smaller. This is because when the time interval becomes smaller, the reachability also drops significantly and thus more queries can be answered directly by TopChain's pruning strategies. But the pruning effect becomes less and less obvious when the time interval is small enough.


\subsection{Performance on Dynamic Updating}  \label{result:update}

We compare TopChain with \textbf{Dagger}~\cite{YildirimCZ13corr}, which is an extension of GRAIL~\cite{YildirimCZ10pvldb} that supports dynamic update. There are other methods that also handle dynamic update in reachability indexing~\cite{BramandiaCN10tkde,DemetrescuI06jcss,RodittyZ04stoc,SchenkelTW05icde,ZhuLWX14sigmod}, but they can handle only a few smaller graphs that we tested and Dagger is the only one that can scale.


Figure~\ref{fig:updateTime} reports the average update time due to edge insertions, where TopChain+ shows the update time including a re-computation of the topological-sort-based labels (presented in Section~\ref{sec:improvement}) for each edge insertion. The result shows that re-computing the topological-sort-based labels dominates the updating time of TopChain, but TopChain is still significantly faster than Dagger. Dagger only performs well when the graph has low average degree, e.g., {\tt amazon}. For query performance, Dagger is worse than GRAIL++~\cite{YildirimCZ12vldb}, which is significantly slower than TopChain as shown in Table~\ref{tab:queryTime}.

\begin{figure*}[!t]
    \centering
    \subfigure[{\small Effect of $|\mathcal{V}|$}]{\includegraphics[width=0.55\columnwidth]{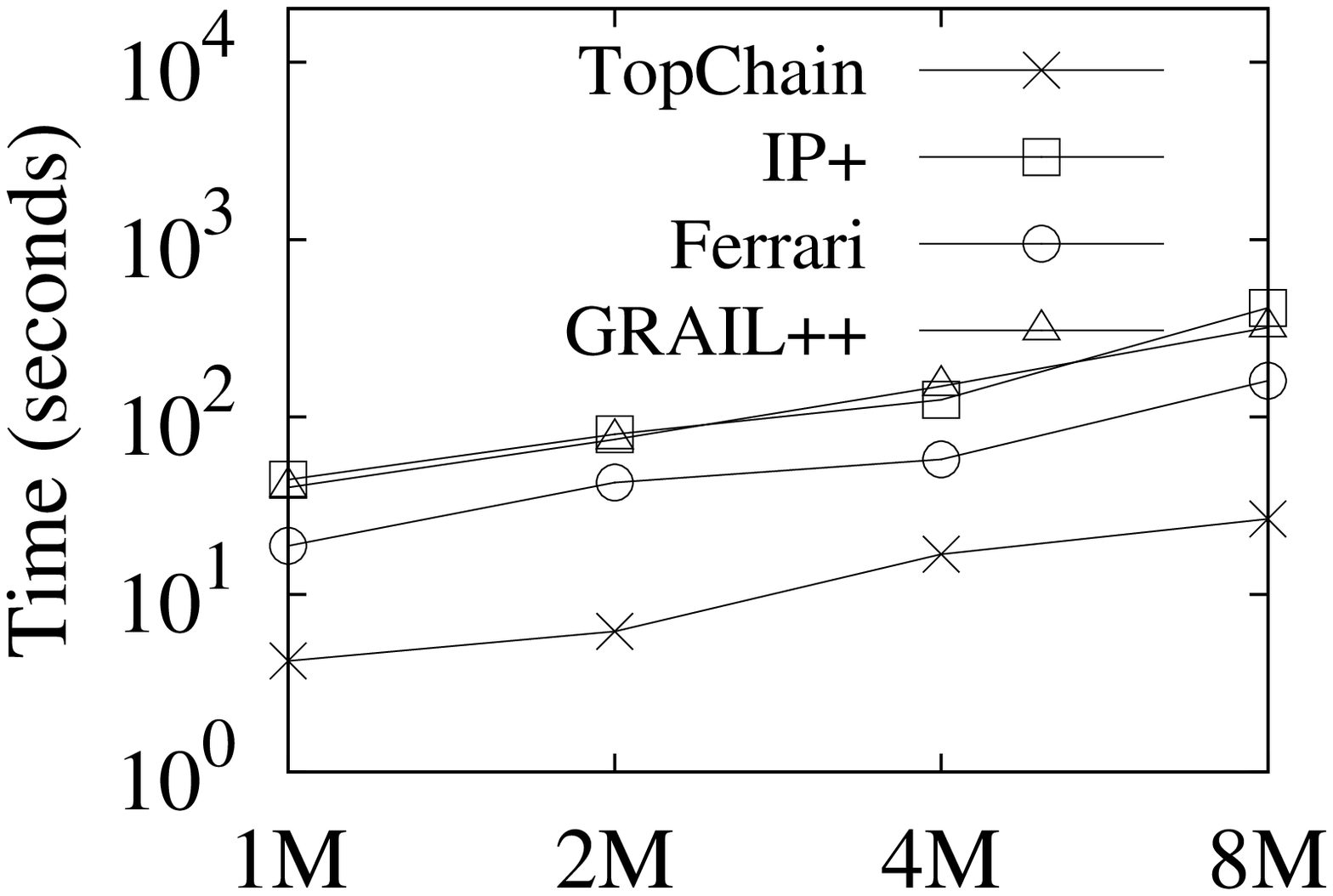}\label{fig:power_v}}
    \subfigure[{\small Effect of $\pi$}]{\includegraphics[width=0.55\columnwidth]{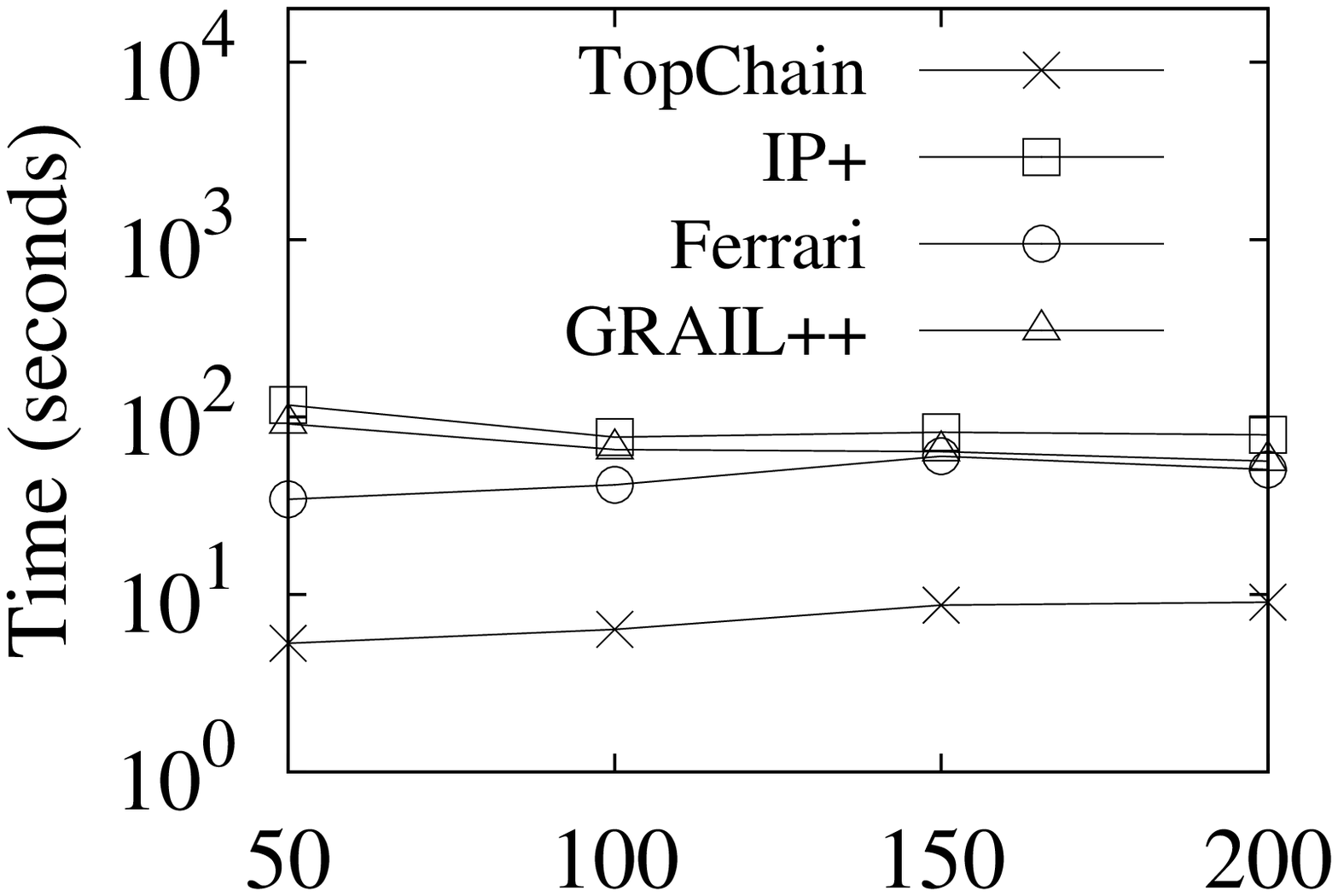}\label{fig:power_pi}}
    \subfigure[{\small Effect of $d_{\it avg}(u,\mathcal{G})$}]{\includegraphics[width=0.55\columnwidth]{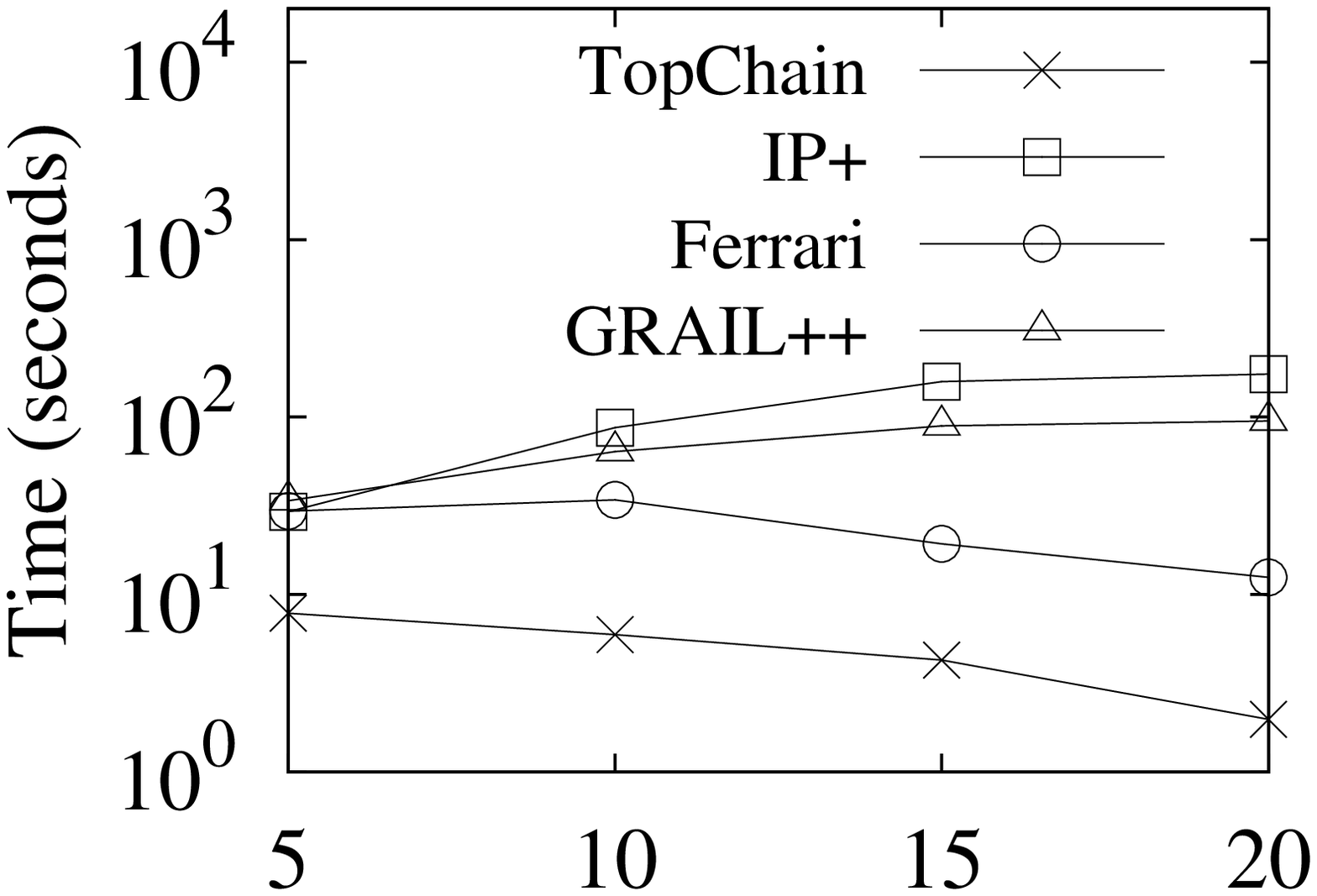}\label{fig:power_td}}
    \caption{Total querying time on synthetic power-law temporal graphs} \label{fig:power}
\end{figure*}


\subsection{Scalability Tests}  \label{result:synthetic}

We generated synthetic temporal graphs for scalability tests. We varied the number of vertices $|\mathcal{V}|$ from 1M to 8M (M $=10^6$), the value of $\pi$ from 50 to 200 which controls the number of multiple temporal edges between two vertices, and the average vertex degree $d_{\it avg}(u,\mathcal{G})$ from 5 to 20. We generated three sets of datasets by varying one of $|\mathcal{V}|$, $\pi$ and $d_{\it avg}(u,\mathcal{G})$, while fixing the other two to their default values: $|\mathcal{V}|$ $=$ 2M, $\pi=100$ and $d_{\it avg}(u,\mathcal{G})=10$.

We compare TopChain with IP+, Ferrari and GRAIL++ since only these methods can scale to large graphs. Since the indexing time and index size of these four methods are comparable, we only report the total querying time for 1000 randomly generated queries.

Figure~\ref{fig:power}(a) shows that all the methods scale roughly linearly as $|\mathcal{V}|$ increase, but the querying time of TopChain increases more slowly when $|\mathcal{V}|$ becomes larger, i.e., from 4M to 8M. Figure~\ref{fig:power}(b) shows that the querying time does not change significantly with the increase in $\pi$. But the effect of $d_{\it avg}(u,\mathcal{G})$ is very different on the four methods, as shown in Figure~\ref{fig:power}(c). While the querying time of IP+ and GRAIL++ increases linearly as  $d_{\it avg}(u,\mathcal{G})$ increases, the query time of TopChain and Ferrari even decreases. This is because as $d_{\it avg}(u,\mathcal{G})$ increases, more vertices become reachable from others. TopChain and Ferrari can directly answer queries where the source query vertex can reach the target query vertex, while IP+ and GRAIL++ cannot avoid online search for such queries.

Overall, the results show that TopChain and Ferrari have better scalability than IP+ and GRAIL++, while TopChain is significantly faster than the other methods in all cases.


\section{Related Work}  \label{sec:related}

We first discuss related work on temporal graphs and then discuss existing reachability indexes.

\subsection{Related Work on Temporal Graphs} \label{ssec:related_temporal}
Studies on temporal graphs, also called \textbf{time-varying graphs}, are mainly related to temporal paths~\cite{CasteigtsFQS12paapp,KempeKK02jcss,KossinetsKW08kdd,Kostakos09physica,PanS11physrev,SantoroQFCA11corr,TangMML09sigcomm,TangMML10ccr,XuanFJ03ijfcs,WuCHKLX14pvldb} so far. Various types of temporal paths were defined to study temporal graphs~\cite{CasteigtsFQS12paapp,KossinetsKW08kdd,Kostakos09physica,PanS11physrev,SantoroQFCA11corr,TangMML09sigcomm,TangMML10ccr,XuanFJ03ijfcs,WuCHKLX14pvldb}. Temporal paths have been applied to study the connectivity of a temporal graph~\cite{KempeKK02jcss}, information latency in a temporal network~\cite{KossinetsKW08kdd}, small-world behavior~\cite{TangSMML10phyrev}, and for computing temporal connected components~\cite{NicosiaTMRML11corr,TangMML10ccr}. Temporal paths have also been used to define various metrics for temporal network analysis such as temporal efficiency and temporal clustering coefficient~\cite{TangMML09sigcomm,TangMML10ccr}, temporal betweenness~\cite{SantoroQFCA11corr} and closeness~\cite{PanS11physrev,SantoroQFCA11corr}. Most of the existing works were focused on concepts and measures for studying temporal graphs, while computational issues were largely ignored. Among these works, only~\cite{WuCHKLX14pvldb,XuanFJ03ijfcs} discussed algorithms for computing temporal paths, which are specific algorithms and are not designed for online query processing of temporal paths. TTL~\cite{WangLYXZ15sigmod} was proposed to answer online temporal path queries, but it is not scalable and its efficiency was only verified on small datasets. Apart from the above works, core decomposition in a large temporal graph was recently studied in~\cite{WuCLKHYW15bigdataconf}. Interesting readers can also refer to surveys on temporal graphs~\cite{CasteigtsFQS12paapp,HolmeS11corr,Muller-HannemannSWZ04atmos}.

\subsection{Related Work on Reachability Indexes} \label{ssec:related_reach}


Our method is also an indexing scheme for reachability querying. Existing reachability indexes can be mainly categorized into three groups: \emph{Transitive Closure}, \emph{2-Hop Labels}, \emph{Label+Search}. The transitive closure (TC) of a vertex $v$ is the set of vertices that $v$ can reach in $G$. Since the TCs are too large, existing methods in this category mainly attempt to reduce the TCs by various compression schemes~\cite{AgrawalBJ89sigmod,ChenC08icde,ChenC11icde,Jagadish90tods,JinRXW11tods,SchaikM11sigmod,WangHYYY06icde}. These methods are not scalable due to the high indexing cost. The 2-hop labeling scheme was first introduced in~\cite{CohenHKZ02soda}, which proved that computing a 2-hop label with minimum size is NP-hard, and proposed a $(\log |V|)$-approximation. Many following works have attempted to reduce the label size by various heuristics~\cite{BramandiaCN08www,CaiP10cikm,ChengYLWY08edbt,ChengHWF13sigmod,JinW13pvldb,SchenkelTW04edbt,YanoAIY13cikm,ZhuLWX14sigmod}, but all these methods are costly and cannot scale to large graphs. TTL~\cite{WangLYXZ15sigmod} is also a 2-hop indexing method, which is designed to answer queries of earliest-arrival time and the duration of a fastest path between two vertices within a time interval in a temporal graph. TTL cannot scale to large temporal graphs due to its expensive indexing cost, and its performance was only verified using small temporal graphs in~\cite{WangLYXZ15sigmod}. TTL also does not support dynamic update, which is necessary for temporal graphs.

The methods~\cite{ChenGK05vldb,SeufertABW13icde,TrisslL07sigmod,VelosoCJZ14edbt,WeiYLJ14pvldb,YildirimCZ12vldb} in the category of Label+Search construct a small index with a small construction cost, but their query performance is generally much worse than methods in the other two categories. However, the recent methods, \emph{Ferrari}~\cite{SeufertABW13icde} and \emph{IP+}~\cite{WeiYLJ14pvldb}, are able to achieve comparable query performance with methods in the other two categories. Ferrari applies tree cover to derive intervals so that reachability queries can be answered by checking the intervals~\cite{AgrawalBJ89sigmod}. However, the number of intervals is too large and Ferrari keeps only up to $k$ approximate intervals for every vertex. IP+ selects the top $k$ vertices, ranked based on independent permutation, that a vertex $v$ can reach or that can reach $v$ to be in $L_{out}(v)$ or $L_{in}(v)$. Thus, online search is needed to process some queries for both Ferrari and IP+.



TopChain is also a Label+Search approach, and the key idea of bounding the label size is similar to IP+ and Ferrari. However, our method is the only one that uses the properties of a temporal graph to design the labeling scheme. There are non-trivial issues in using these properties, and we provide detailed theoretical analysis to prove the correctness of labeling and querying. We also devise an efficient algorithm for dynamic update of the labels based on the graph properties, while both Ferrari and IP+ do not support update.

%
%

\section{Conclusions}  \label{sec:conclude}

In this paper, we presented TopChain, an efficient labeling scheme that employs the properties of a temporal graph for answering temporal reachability queries and time-based path queries. TopChain has a linear index construction time and linear index size, which makes the method scalable. TopChain significantly outperforms the state-of-the-art indexes~\cite{SeufertABW13icde,SchaikM11sigmod,WangLYXZ15sigmod,WeiYLJ14pvldb,YildirimCZ12vldb,ZhuLWX14sigmod}, and supports efficient dynamic update. As temporal graphs can be used to model many networks with time-ordered activities, TopChain is a useful tool for querying and analyzing these graphs. 


As for future work, we plan to apply TopChain to develop scalable systems for processing temporal graphs based on our existing work such as Husky~\cite{YangLC16pvldb}, Quegel~\cite{YanCOYLLZN16pvldb}, Blogel~\cite{YanCLN14pvldb} and Pregel+~\cite{YanCLN15www}.

\bibliographystyle{IEEEtran}
\bibliography{IEEEabrv,temreach}

\end{document}